\numberwithin{equation}{section}
\theoremstyle{plain}
\newtheorem{theorem}{Theorem}[section]
\newtheorem{lemma}[theorem]{Lemma}
\newtheorem{remark}[theorem]{Remark}
\newtheorem{assumption}[theorem]{Assumption}
\newtheorem{condition}{Condition}
\newcommand{\by}{\mathbf{y}}
\newcommand{\bz}{\mathbf{z}}
\newcommand{\bE}{\textsf{E}}
\newcommand{\bP}{\textsf{P}}
\newcommand{\bV}{\textsf{Var}}
\newcommand{\bCov}{\textsf{Cov}}
\newcommand{\bvareps}{\boldsymbol\varepsilon}
\newcommand{\CBP}{\textrm{CBP}}
\newcommand{\CBB}{\textrm{CBB}}
\newcommand{\per}{\textrm{P}}
\def\ep{\textsf{E}} 
\def\Var{\textsf{Var}} 
\begin{document}

\begin{frontmatter}
\title{Change-point detection for multivariate and non-Euclidean data with local dependency}
\runtitle{Change-point detection for data with local dependency}

\begin{aug}
\author{\fnms{Hao} \snm{Chen} \ead[label=e1]{hxchen@ucdavis.edu}}
%
\runauthor{Hao Chen}

\affiliation{University of California, Davis}

\address{Department of Statistics\\
University of Calfornia, Davis\\
One Shields Avenue \\
Davis, Calfornia 95616 \\
USA \\
\printead{e1}\\
\phantom{E-mail:\ }}

%
%
\end{aug}

\begin{abstract}
In a sequence of multivariate observations or non-Euclidean data objects, such as networks, local dependence is common and could lead to false change-point discoveries.  We propose a new way of permutation -- circular block permutation with a random starting point -- to address this problem.  This permutation scheme is studied on a non-parametric change-point detection framework based on a similarity graph constructed  on the observations, leading to a general framework for change-point detection for data with local dependency.  Simulation studies show that this new framework retains the same level of power when there is no local dependency, while it controls type I error correctly for sequences with and without local dependency.  We also derive an analytic $p$-value approximation under this new framework.  The approximation works well for sequences with length in hundreds and above, making this approach fast-applicable for long data sequences.
\end{abstract}

\begin{keyword}[class=AMS]
\kwd[Primary ]{62G32}
\end{keyword}

\begin{keyword}
\kwd{graph-based tests}
\kwd{circular block permutation}
\kwd{non-parametric}
\kwd{scan statistic}
\kwd{high-dimensional data}
\kwd{non-Euclidean data}
\kwd{tail probability}
\kwd{analytic p-value approximation}
\end{keyword}

\end{frontmatter}

\section{Introduction}
 
Change-point detection is a widely studied problem in statistics and has its applications in many fields. 
In the typical formulation, we have observations $\{y_t: t=1,\dots,n\}$ over time (or some other meaningful orderings, such as a one-dimensinoal spatial domain), and test whether there exists $\tau\in \{1,\dots,n-1\}$ such that the underlying distribution of $y_t$ changes at $\tau$. There is a rich literature of this model when $y_t$'s are real or integer valued scalars (see \cite{carlstein1994change,csorgo1997limit} for a survey).

As we entering the big data era, change-point analysis for sequences of multivariate observations or non-Euclidean data objects is gaining more and more attentions.  For example, in text or sequence analysis, each observation in the sequence could be a vector of word counts over a large dictionary of words \citep{giron2005bayesian, tsirigos2005new}.  Network data is ubiquitous nowadays as well.  Email, phone and online chat records can be used to construct networks of social interactions among individuals \citep{kossinets2006empirical,eagle2009inferring}.   
A large part of these studies is characterizing how the network evolves through time.  Here, each observation is a network and one might ask whether there is an abrupt shift in network connectivity at any point in  time.  In these sequences of complicated data types, it is common that observations are autocorrelated.  For example, relationships among people last over an extended time period and the social networks have serial correlations.


A closely related field is time series data analysis.  The ARCH model proposed by \cite{engle1982autoregressive} and the GARCH model proposed by \cite{bollerslev1986generalized} and their variants were widely used for studying one-dimensional time series data.  There are many generalizations to accommodate multivariate time series data (see for examples \cite{bauwens2006multivariate,silvennoinen2009multivariate,aue2009break} and references therein).  These models are useful for low-dimensional data and/or for detecting specific types of changes.  For high-dimensional data, tests based on these parametric models cannot be applied or lack  power unless some strong assumptions are made on the data 
to avoid the estimation of the large number of nuisance parameters.


In this work, we restrain the change-point detection problem to \emph{locally dependent} data, in which we are able to develop a general framework for high-dimensional data and non-Euclidean data objects. 
We leave the problem of long-range dependence for future studies.  The proposed framework builds upon an earlier work by \cite{chen2015graph}, in which the authors developed a nonparametric framework for change-point detection for generic data types under the assumption that the observations are \emph{independent}.  When there is local dependence in the data, the method in \cite{chen2015graph} could result in more false discoveries than it supposed to have (details see in Section \ref{sec:framework}).  To address this problem, we propose to use a new way of permutation -- circular block permutation with a random starting point.  This new way of permutation retains the local structure and could control type I error correctly when the sequence is locally dependent.  Moreover, simulation studies show that it retains the same level of power when the observations in the sequence are independent.


In the following, we use $\{\by_1, \by_2, \dots, \by_n\}$ to denote the data sequence, where $\by_t$ could lie in a high-dimensional or a non-Euclidean space.  We focus on the single change-point alternative to illustrate the idea, i.e., there possibly exists at most one change-point. 
The proposed procedure can be extended to the changed interval alternative and to multiple change-points.  Discussions on these extensions see in Section \ref{sec:multipleCP}.

The rest of the paper is organized as follows.  Section \ref{sec:review} briefly reviews the method introduced in \cite{chen2015graph} in utilizing a similarity graph constructed on observations for change-point detection.  You can skip reading this section if you are familiar with this method.   Section \ref{sec:framework} discusses the issues of the method in \cite{chen2015graph} when the sequence is locally dependent, and proposes a new permutation framework to address the issue.  Section \ref{sec:statistic} discusses more about the proposed test statistic and derives analytic formulas to calculate the test statistic.  Section \ref{sec:pvalue} derives the analytic $p$-value approximations for the test statistic and check how the approximations work for finite samples. We concludes the paper by discussing the choice of the block size in the new framework and the extension of the proposed framework to multiple change-points in Section \ref{sec:discussion}. 

\section{A brief review of \cite{chen2015graph}}
\label{sec:review}

In \cite{chen2015graph}, the authors assume that $\by_1, \by_2, \dots, \by_n$  are \emph{independent}.  They adapted the \emph{edge-count test} to the scan statistic framework: Each $t$ divides the observations into two samples, $\{\by_1,\dots,\by_t\}$ and $\{\by_{t+1},\dots,\by_n\}$, and the edge-count test is conducted to test whether these two samples are from the same distribution or not.  Then, the maximum of the scan statistics over $t$ is used as the test statistic. 

The edge-count test, introduced in \cite{friedman1979multivariate}, is a two-sample test that is based on a similarity graph constructed on the pooled observations of two samples.  The similarity graph can be a given graph that reflects the similarity between observations \citep{chen2013graph}.  More generally, it can be constructed based on a similarity measure through a certain criterion, such as a minimum spanning tree (MST)  \citep{friedman1979multivariate}, which is a tree connecting all observations with the total distance across edges minimized, a minimum distance pairing  \citep{rosenbaum2005exact}, or a nearest neighbor graph \citep{henze1988multivariate}.  

The edge-count test counts the number of edges in the graph that connect from different samples and reject the null of equal distribution when the count is significantly \emph{smaller} than its null expectation -- when the two samples are from the same distribution, the two samples are well mixed and this count is relatively large, so a small count is an evidence for rejecting the null of equal distribution. 
Let $G$ be the similarity graph on all observations in the sequence.  The edge-count test statistic at time $t$ is:

\vspace{-0.8em}
$$R_G(t) = \sum_{(i,j)\in G} I_{g_i(t)\neq g_j(t)}, \quad g_i(t) = I_{i>t}, $$
\vspace{-0.9em}

\noindent where $I_A$ is the indicator function that takes value 1 if event $A$ is true and 0 otherwise.  Figure \ref{fig:Rt} illustrates the computation of $R_G(t)$ on a small artificial data set (the observations are in 2-dimension for illustration purpose).

\begin{figure}[!htp]
\includegraphics[width=.244\textwidth]{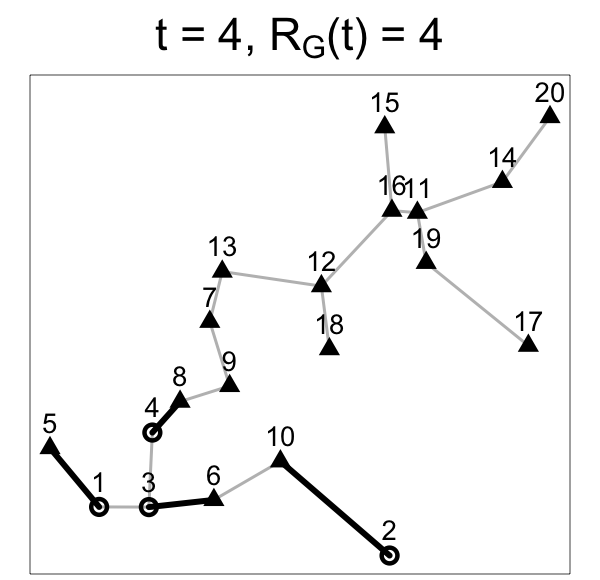} 
\includegraphics[width=.244\textwidth]{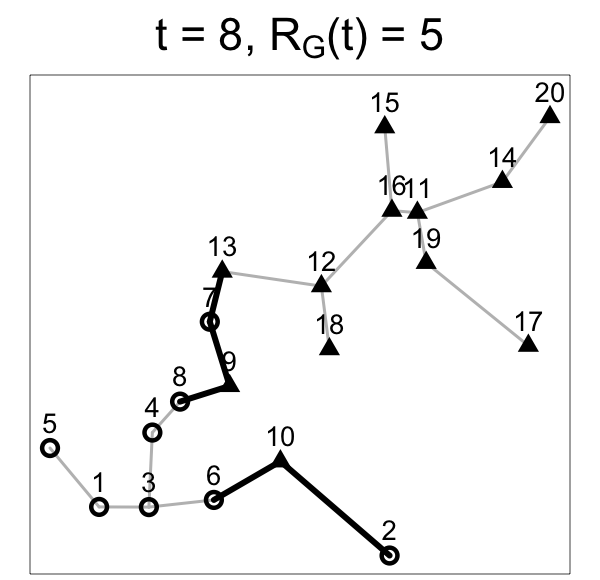}  
\includegraphics[width=.244\textwidth]{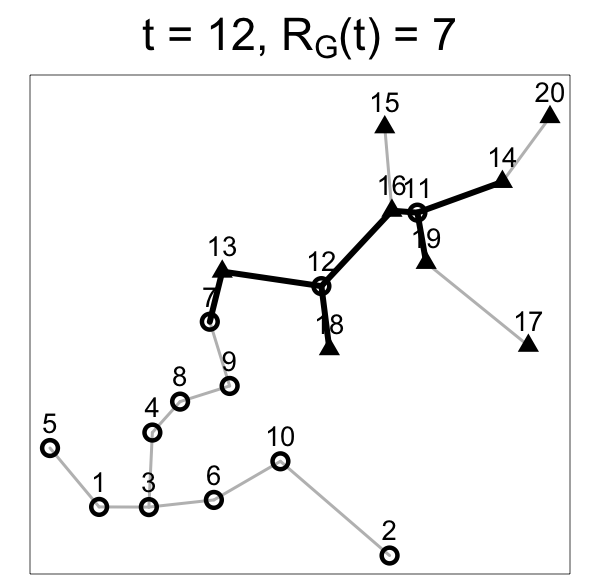} 
\includegraphics[width=.244\textwidth]{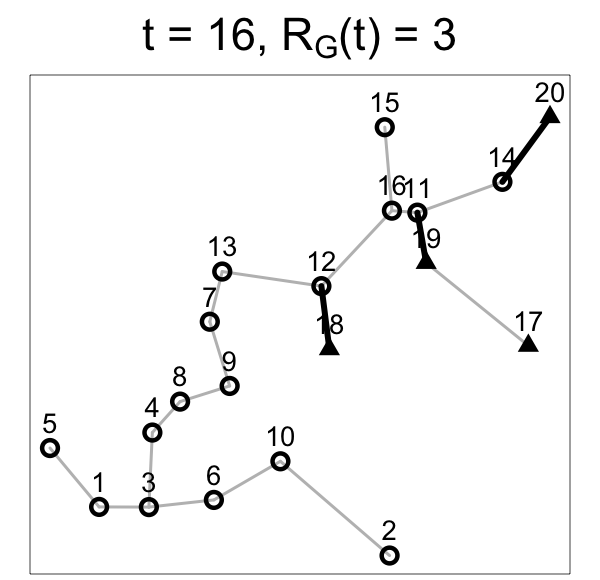}
\caption{The computation of $R_G(t)$ for 4 different $t$'s on a small artificial data set of length $n=20$ with $G$ be the MST on the Euclidean distance.  The index of each observation is beside each point.  The first 10 points are randomly drawn from $\mathcal{N}(\mathbf{0}, I_2)$ and the next 10 points are randomly drawn from $\mathcal{N}((2,2)^T, I_2)$.  Each $t$ divides the observations into two groups, one group for observations before and at $t$ (shown as circles) and the other group for observations after $t$ (shown as triangles).  Edges that connect observations from the two different groups are emboldened in the graph.  $G$ does not change as $t$ changes, but the group identities of some observations change, causing $R_G(t)$ to change. \label{fig:Rt}}
\vspace{-1em}
\end{figure}

Under the null hypothesis of no change-point and the assumption that $\by_i$'s are independent, the joint distribution of $\{\by_i:~i=1,\dots,n\}$ is the same under permutation.  Hence, the null distribution of $R_G(t)$ is defined to be the permutation distribution, which places $1/n!$ probability on each of the $n!$ permutations of $\{\by_i:~i=1,\dots,n\}$. 
 When there is no further specification, we denote by $\bP_\per$, $\bE_\per$, $\bV_\per$ probability, expectation, and variance, respectively, under the permutation null distribution.

The authors standardized $R_G(t)$ so that it is comparable across $t$.  Let
\begin{align} \label{eq:Zt}
  Z_G(t) & = -\frac{R_G(t)-\bE_\per(R_G(t))}{\sqrt{\bV_\per(R_G(t))}}.
\end{align}
The sign is flipped that a \emph{large} $Z_G(t)$ indicates a change-point.
The analytic expressions for $\bE_\per(R_G(t))$ and $\bV_\per(R_G(t))$ are given in \cite{chen2015graph}. 
%
%
The null hypothesis of no change-point is rejected if the scan statistic
\begin{equation}
  \label{eq:Zmax}
  \max_{n_0 \leq t \leq n_1} Z_G(t),  \quad (n_0, n_1 \text{ prespecified}) 
  \end{equation}
 is greater than a threshold.  When $n$ is small, this threshold could be determined by performing random permutations directly; when $n$ is large, \cite{chen2015graph} provided accurate analytic formulas to approximate the permutation $p$-value, allowing fast application of the method.  The authors also shown through simulations that this graph-based testing framework has better power than likelihood-based methods when the dimension of the data is moderate to high.

\section{A circular block permutation framework for locally dependent data}
\label{sec:framework}

The method in \cite{chen2015graph} assume that the observations are \emph{independent}.
Under the independence assumption, we can permute the order of the observations to get a pool of sequences that have the same distribution as the original sequence under the null hypothesis of no change-point.  However, when there is \emph{dependence structure} within the sequence, such as autocorrelation, the above argument no longer holds.  For example, if we apply the scan statistic \eqref{eq:Zmax} in \cite{chen2015graph} to an autocorrelated sequence and use the $p$-value calculated based on permutation, it rejects the null hypothesis more often than it should do (Figure \ref{fig:histPerm}, right panel).

\begin{figure}[!htp]

\includegraphics[width=.45\textwidth]{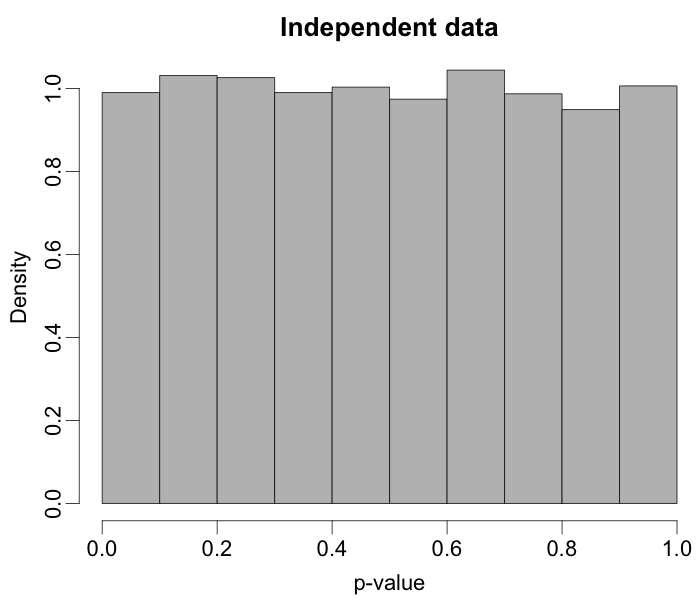} \ \  \
\includegraphics[width=.45\textwidth]{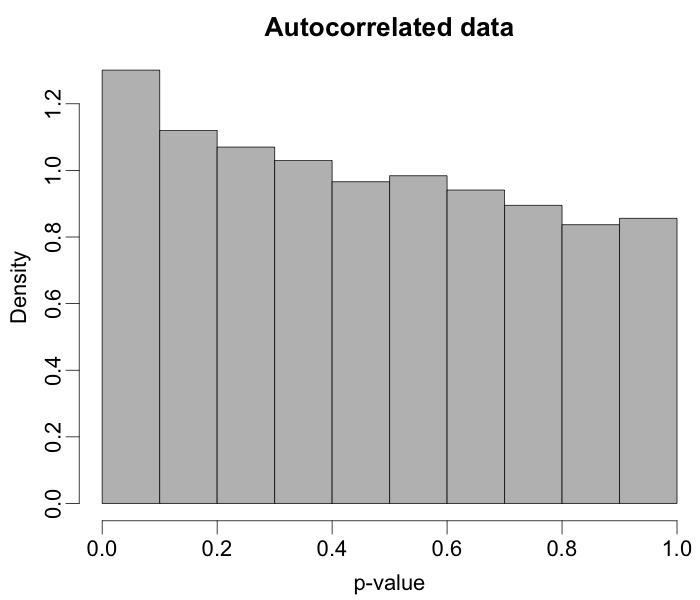}
\caption{Histograms of $p$-values using the method in \cite{chen2015graph} in testing homogeneity of 10,000 sequences of no change-point.  Left panel: the observations in each sequences are independently generated from multivariate normal distribution ($\by_t\overset{iid}{\sim}\mathcal{N}(\mathbf{0}, \Sigma),~d=10,~\Sigma(i,j)=|i-j|^{0.6},~n=200$).  Right panel: each sequence is generated from the multivariate autoregression model ($\by_t = \rho\by_{t-1} + \bvareps_t,~t=1,\dots,n$, with  $\by_0\sim \mathcal{N}(\mathbf{0},\frac{1}{1-\rho^2}\Sigma),~\bvareps_1,\dots,\bvareps_n \overset{iid}{\sim}\mathcal{N}(\mathbf{0}, \Sigma),~\rho=0.1,~d=10,~n=200$).  } 
\label{fig:histPerm}
\end{figure}


When the sequence has dependency over time, the permutation null distribution is no longer a good surrogate to the true null distribution as permutation destroy the local structure.  If the dependency structure can be removed from the sequence, the remaining sequence with independent observations can be analyzed through the method in \cite{chen2015graph}.  This is, however, not realistic for many applications with high-dimensional/non-Euclidean data sequences.  To address the local dependence issue, we propose a new null distribution that serves as a better surrogate to the true null distribution than the permutation null.

%
\subsection{Circular block permutation with a random starting point}
\label{sec:cbp}

The block-resampling bootstrap was proposed by \cite{kunsch1989jackknife} and independently by \cite{liu1992moving} as a resampling procedure for weakly dependent stationary observations.  The idea is that the dependency structure is preserved within the blocks.  This was extended to circular block resampling bootstrap by wrapping the data around in a circle before blocking them \citep{politis1992circular}.  For change-point analysis, the block permutation with fixed blocks starting from the first observation and the circular block bootstrap were studied on dependent data for one-dimensional observations \citep{kirch2006resampling}.

In light of these studies, we propose to use \emph{circular block permutation with a random starting point} to generate a pool of sequences representing realizations from approximate distributions of the original sequence with local dependency under the null of no change.  The recipe with block size $L$ on a sequence of length $n$ is as follows:

\begin{enumerate}[(1)]
\item The starting point is chosen uniformly from the $n$ observations, which is denoted as $k_0$.  If $k_0>1$, the first $k_0-1$ observations are moved to the end of the sequence: $\{\by_{k_0}, \dots,\by_n,\by_1,\dots,\by_{k_0-1}\}$.
\item The new sequence is blocked into $[n/L]$\footnote{For a real value $s$, we use $[s]$ to denote the largest integer that is no larger than $s$.} blocks of size $L$ starting from the first observation $\by_{k_0}$.  It is possible that the last block has less than $L$ observations.
\item The $[n/L]$ blocks are permuted.
\end{enumerate}

For the above recipe, it is easy to see that the resulting sequence from the circular block permutation with a random starting point is one permutation of the original sequence -- each observation appears in the resulting sequence and appears only once.  Then, the similarity graph on the resulting sequence is the same as that on the original sequence, which makes theoretical analysis on this framework tractable.  In addition, the randomized starting point ensures that the probability of any observation $\by_i$ appears at any location $j$ in the resulting sequence is $1/n$, ensuring unbiasedness. 

To make the theoretical treatment more tractable, we work under the following variant of the framework: We first augment the sequence by $x$ ($0\leq x<L$) pseudo observations by adding them to the end of the sequence so that $n+x$ is divisible by $L$.  These augmented $x$ observations have no edge connected to any other observations.  Then the recipe described above is applied to this augmented sequence with $n+x$ observations.
%
This variant is the same as the original version when $n$ is divisible by $L$ and works similarly when not.  In this variant, all the blocks are of size $L$, so the theoretical treatments are much more tractable.  In the following, we work under this variant and short the framework as `circular block permutation' or `\CBP' for simplicity.  We also use $n$ to denote $n+x$ for simplicity.  We use $\bP_{\CBP}$, $\bE_{\CBP}$, $\bV_{\CBP}$ to denote probability, expectation, and variance, respectively, under this framework. 
We consider $L$ to be fixed for the rest of the paper.  A discussion on how to choose $L$ in a data driven way is in Section \ref{sec:Lchoice}.

\subsection{Performance under CBP}

The standardized edge-count statistic under the circular block permutation can be defined as:
\begin{equation}
  \label{eq:Zbp}
  Z_{G,\CBP}(t) = - \frac{R_G(t)-\bE_{\CBP}(R_G(t))}{\sqrt{\bV_{\CBP}(R_G(t))}}.
\end{equation}
The scan statistic is then defined as:
\begin{equation} 
  \label{eq:maxZbp}
  \max_{n_0\leq t\leq n_1} Z_{G,\CBP}(t).
\end{equation}
In the following, with no further specification, we set $n_0=[0.05n]$ and $n_1=n-n_0$.

Figure \ref{fig:histcbp} shows the histograms of $p$-values in testing the homogeneity of sequences when the sequence has no change-point.   For each sequence, the $p$-value of the test is obtained through doing 100,000 CBPs, respectively.   Now we see that the type I error is correctly controlled for sequences of autocorrelated data. 


\begin{figure}[!htp]

\includegraphics[width=.45\textwidth]{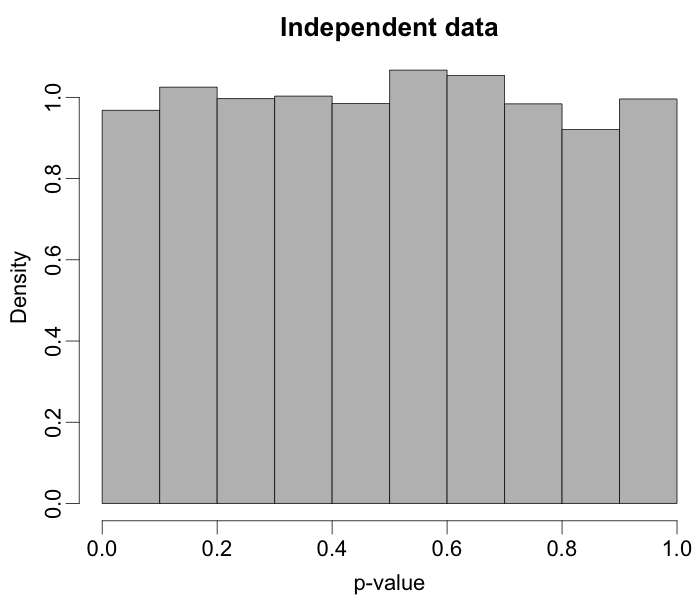} \ \  \
\includegraphics[width=.45\textwidth]{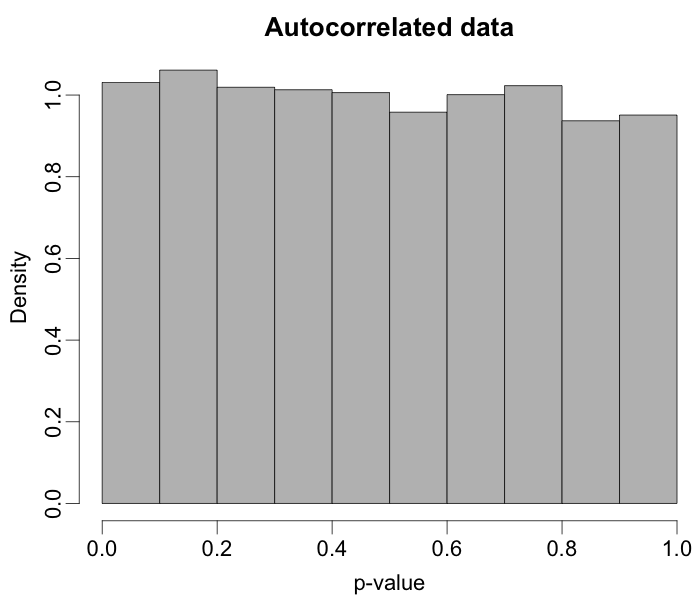}
\caption{Histograms of $p$-values under CBP (with block size $L=5$) in testing homogeneity of the same set of sequences in Figure \ref{fig:histPerm}.}
\label{fig:histcbp}
\end{figure}

%
%

An immediate follow-up question is whether the improvement in controlling the type I error of the circular block permutation framework come with a sacrifice on its power.  To get an idea of this issue, we compare the power of the two frameworks.
  Table \ref{tab:power} shows the estimated power based on 1,000 simulation runs.  In each simulation run, the sequence is generated in the same way as in Figure \ref{fig:histPerm}, while here, there is a mean shift in the middle of the sequence with the $L_2$ distance between the means before and after the change to be 2.  This specific alternative is chosen so that the tests have moderate power.  We see that when data are independent, the power under CBP is similar to that under permutation.   When the sequence is autocorrelated, the permutation framework cannot be used, while the CBP framework has power on par with the independent scenario. 
    

\begin{table}[!htp]
\caption{Estimated power based on 1,000 simulation runs.  Significance level set to be 0.05. \label{tab:power}}
\begin{tabular}{|c|c|c|}
\hline
& permutation & CBP, $L=5$ \\ \hline
Independent data & 0.791 & 0.779 \\ \hline
Autorrelated data & -- & 0.775 \\ \hline
\end{tabular}

\end{table}

%
%

These simulation results show the ability of CBP in controlling the type I error rate and at the same time keeping substantial power for sequences with local dependency.  In the above simulation runs, the expectation and variance of $R_G(t)$ under CBP, as well as the $p$-value of the test, are calculated by randomly sampling from the circular block permutation distribution.  This is very time consuming if one wants to get a good estimate of them.  In the following, we work on analytic expressions (or approximate analytic expressions when the exact analytic expression is hard to obtain) of these quantities to make this framework easy to use in practice.



\section{Analytic expressions under CBP}
\label{sec:statistic}

Set $m\equiv n/L$, there are in total $L\times m!$ CBPs and it is very time consuming to draw all these CBPs when $m$ is moderate to large.  In this section, we derive exact analytic expression for $\bE_{\CBP}(R_G(t))$ (Section \ref{sec:Ecbp}) and an approximate analytic expression for $\bV_{\CBP}(R_G(t))$ (Section \ref{sec:Vcbp}). 
In the following, for any scalar $s$, we use $(s)_+$ to denote $\max(0,s)$; and for set $S$, we use $|S|$ to denote the number of elements in the set.


\subsection{$\bE_{\CBP}(R_G(t))$}
\label{sec:Ecbp}

Let $\pi_{\CBP}(i)$ be the index of $\by_i$ under the circular block permutation.  Then
\begin{align}
\bE_{\CBP}(R_G(t)) & = \sum_{(i,j)\in G} P(g_{\pi_{\CBP}(i)}(t) \neq g_{\pi_{\CBP}(j)}(t)) \\
& = 2 \sum_{(i,j)\in G} P(\pi_{\CBP}(i)\leq t, \pi_{\CBP}(j) >t). \nonumber
\end{align}
The design of the circular block permutation ensures that, besides $t$, $n$ and the block size $L$, $P(\pi_{\CBP}(i)\leq t, \pi_{\CBP}(j) >t)$ only depends on $\delta_{ij}=\min(|i-j|, n-|i-j|)$, which is the smaller index difference between $\by_i$ and $\by_j$ in the circle formed by linking the end of the sequence to its start.  In particular,  the probability depends on which one of the following categories $\delta_{ij}$ belongs to: $\{\delta_{ij} =1\}$, $\dots$, $\{\delta_{ij} = L-1\}$, $\{\delta_{ij}\geq L\}$.  The probability is the same in each category and can be calculated exactly for each of them.  Hence, we can classify the edges in $G$ according to these categories and get Theorem \ref{thm:Ecbp}.

\begin{theorem}\label{thm:Ecbp}
 For each $t \in\{1,\dots,n\}$, we write $t$ in the form of $t=aL+b$ where $a=[t/L],~b=t-aL$, then
\begin{align*}
\bE_{\CBP}(R_G(t)) & = \sum_{k=1}^L 2\,p(k,a,b) |\mathcal{E}_k|,
\end{align*}
where
\begin{align*}
p(k,a,b) & = (\delta_{ij}-b)_+ \tfrac{a(m-a)}{n(m-1)} + (b-(L-k))_+ \tfrac{(a+1)(m-a-1)}{n(m-1)} \\
& \quad + (\min(b,L-k) -(b-k)_+)\tfrac{a(m-a-1)+(m-1)}{n(m-1)}, \\
\mathcal{E}_k & = \{(i,j)\in G: \delta_{ij} =k\},\quad  k=1,\dots, L-1, \\ 
\mathcal{E}_L & = \{(i,j)\in G: \delta_{ij} \geq L\}.
\end{align*}
\end{theorem}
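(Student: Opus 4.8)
The starting point is the identity already displayed just before the theorem statement,
\[
\bE_{\CBP}(R_G(t)) = 2 \sum_{(i,j)\in G} P(\pi_{\CBP}(i)\le t,\ \pi_{\CBP}(j) > t),
\]
so the whole task reduces to evaluating the single-pair probability $p_{ij} := P(\pi_{\CBP}(i)\le t,\ \pi_{\CBP}(j) > t)$ and showing it depends on $(i,j)$ only through the category of $\delta_{ij}$. The plan is first to describe the permutation analytically: conditional on the random starting point, every $\by_i$ acquires a block index $c_i$ and a within-block offset $o_i\in\{0,\dots,L-1\}$, and after the blocks are permuted by a uniform random $\sigma$ its final rank is $\pi_{\CBP}(i)=\sigma(c_i)L+o_i+1$. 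Writing $t=aL+b$ as in the statement, the event $\{\pi_{\CBP}(i)\le t\}$ becomes ``$\sigma(c_i)<a$, or $\sigma(c_i)=a$ and $o_i<b$'', and $\{\pi_{\CBP}(j)>t\}$ becomes ``$\sigma(c_j)>a$, or $\sigma(c_j)=a$ and $o_j\ge b$''. This translation is the key step that turns the rank event into a statement about block positions relative to $a$ and offsets relative to $b$.

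Next I would compute $p_{ij}$ by conditioning on the starting point (which fixes $c_i,c_j,o_i,o_j$) and averaging over $\sigma$, splitting into the two geometrically distinct cases. If $\by_i,\by_j$ fall in the same block, both events can hold only when that common block lands at position $a$ with $o_i<b\le o_j$, contributing $\frac1m\,I(o_i<b\le o_j)$. If they fall in different blocks, $\sigma(c_i)$ and $\sigma(c_j)$ are two distinct uniform positions, and enumerating which zones (before $a$, at $a$, after $a$) they occupy gives $[a(m-a-1)+a\,I(o_j\ge b)+(m-a-1)\,I(o_i<b)]/(m(m-1))$.

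It then remains to average these conditional probabilities over the random starting point. For a pair at circular distance $d=\delta_{ij}$, the offset $o_i$ is uniform on $\{0,\dots,L-1\}$ and $o_j\equiv o_i\pm d \pmod L$; when $d\le L-1$ the pair lies in the same block for exactly $L-d$ of the $L$ equally likely offsets and in adjacent blocks for the remaining $d$, while for $d\ge L$ it always lies in different blocks. Carrying out the averages reduces everything to counting, for each $d$, how many offsets satisfy the indicators $I(o_i<b)$, $I(o_j\ge b)$ and $I(o_i<b\le o_j)$. This count manifestly depends on $(i,j)$ only through $d$, which justifies partitioning the edges into the classes $\mathcal{E}_1,\dots,\mathcal{E}_{L-1}$ (for $\delta_{ij}=1,\dots,L-1$) and $\mathcal{E}_L$ (for $\delta_{ij}\ge L$), the $d\ge L$ case collapsing to the value at $d=L$ since the same-block contribution vanishes there.

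I expect the main obstacle to be this last bookkeeping step: the raw counts emerge as several clipped quantities, and one must verify they regroup into exactly the three coefficients $(\delta_{ij}-b)_+$, $(b-(L-k))_+$ and $\min(b,L-k)-(b-k)_+$ appearing in $p(k,a,b)$, which ultimately rests on the elementary identity that these three clipped quantities sum to $k$ across all cases of $b$ relative to $k$ and $L-k$. A secondary point to check is orientation-independence --- that replacing $d$ by $n-d$ (equivalently, swapping the two endpoints) leaves $p_{ij}$ unchanged --- since the different-block conditional probability is not symmetric in $o_i$ and $o_j$, so this invariance only surfaces after the offset average. Once the per-class probability is identified as $p(k,a,b)$, summing $2\,p(k,a,b)$ over $(i,j)\in\mathcal{E}_k$ and then over $k$ gives the stated formula.
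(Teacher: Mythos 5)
Your proposal is correct and follows essentially the same route as the paper's proof: your two-stage conditioning (first on the random blocking, i.e.\ the offsets $o_i,o_j$, then on the uniform block permutation $\sigma$) with the same-block/adjacent-block split is exactly what the paper encodes in its ``Prob.~1''/``Prob.~2'' tables, your indicator events corresponding one-to-one to its four configurations. The bookkeeping identity you flag --- that $(\delta_{ij}-b)_+ + (b-(L-k))_+ + \min(b,L-k)-(b-k)_+ = k$ --- is indeed what reconciles the raw offset counts with the stated $p(k,a,b)$, and your observation that orientation-independence only emerges after averaging over offsets matches the paper's device of tabulating both orientations and checking the sums agree.
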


\begin{remark}
When $t$ is divisible by $L$ ($t=aL$), we have $p(k,a,0) = \delta_{ij}\frac{a(m-a)}{n(m-1)}$. Then $\bE_{\CBP}(R_G(t)) = \frac{2a(m-a)}{n(m-1)}\sum_{k=1}^L k |\mathcal{E}_k|$.	
\end{remark}

\begin{proof}[Proof of Theorem \ref{thm:Ecbp}]
We compute the probability 
\begin{equation}\label{eq:pij}
P(\pi_{\CBP}(i)\leq t, \pi_{\CBP}(j)>t)
\end{equation}
under difference scenarios.

When $\delta_{ij}\geq L$, $\by_i$ and $\by_j$ are always in two different blocks.
If $b=0$, then $\pi_{\CBP}(i)\leq t$ only if the block containing $\by_i$ is placed in the first $a$ blocks after the circular block permutation, and $\pi_{\CBP}(j)>t$ only if the block containing $\by_j$ is placed in the last $m-a$ blocks after the circular block permutation.  So the probability \eqref{eq:pij} is 
$$\tfrac{a}{m} \tfrac{m-a}{m-1} = \tfrac{La(m-a)}{n(m-1)}.$$
If $b>0$, we need to discuss whether the block containing either $\by_i$ or $\by_j$ sits on $t$, whether $\by_i$ is in the first $b$ observations in the block, and whether $\by_j$ is in the last $L-b$ observations in the block.  Enumerating all possibilities, the probability \eqref{eq:pij} is
\begin{align*}
& \tfrac{a}{m}\left(\tfrac{m-a-1}{m-1} + \tfrac{1}{m-1}\tfrac{L-b}{L}  \right) + \tfrac{1}{m} \tfrac{b}{L} \tfrac{m-a-1}{m-1} = \tfrac{La(m-a)+b(m-2a-1)}{n(m-1)}.
\end{align*}

When $\delta_{ij}<L$, we also discuss the two scenarios: $b=0$ and $b>0$.


If $b=0$, $\by_i$ and $\by_j$ need to be in different blocks to have $(\pi_{\CBP}(i)\leq t, \pi(j)_{\CBP}>t)$.  Among the $L$ possible ways in blocking the sequence, $\delta_{ij}$ of them have $\by_i$ and $\by_j$ in different blocks, so the probability \eqref{eq:pij} is 
$$\tfrac{\delta_{ij}}{L}\tfrac{a(m-a)}{m(m-1)} = \tfrac{\delta_{ij} a(m-a)}{n(m-1)}.$$

If $b>0$, $\by_i$ and $\by_j$ could be in different blocks or in the same block to satisfy $(\pi_{\CBP}(i)\leq t, \pi_{\CBP}(j)>t)$ .  If they are in the same block, we denote that particular block by $B_1$.  If they are in different blocks, the two blocks much be adjacent.  Among the two blocks, we denote the left block to be $B_1$ (to make this argument consistent, the first block of the sequence and the last block of the sequence are considered to be adjacent and the last block of the sequence is considered to be on the left of the first block of the sequence).   We then let the adjacent block right of $B_1$ to be $B_2$.
For block $B_1$, we further divide it into two sub-regions with $B_{1,l}$ denoting the first $b$ location(s) of the block and $B_{1,r}$ denoting the rest $L-b$ location(s) of the block.  We define $B_{2,l}$ and $B_{2,r}$ similarly for block $B_2$.  


Then, there are four configurations for the placements of $i$ and $j$ for each of the two scenarios: (i) $i$ on the left of $j$ within $B_1\cup B_2$, and (ii) $i$ on the right of $j$ within $B_1\cup B_2$.  The four configurations are listed in Tables \ref{tab:ptable1} and \ref{tab:ptable2} for these two scenarios, respectively.  Together in the tables are the probability of having each of the configuration out of $L$ different ways of doing the blocking (Prob. 1 in the tables) and the proportion of the permutations in terms of permuting the blocks so that $(\pi_{\CBP}(i)\leq t, \pi_{\CBP}(j)>t)$ given the configuration (Prob. 2 in the tables).  For each of the two scenarios, summing over the product of the two probabilities (Prob. 1 and Prob. 2 in the tables) gives \eqref{eq:pij}.

\begin{table}[h]
\caption{Four configurations of the placement of $i$ and $j$ when $\delta_{ij}<L$, $b>0$ and $i$ on the left of $j$ within $B_1\cup B_2$.  For each configuration, ``Prob. 1" is the probability of having the configuration out of $L$ different ways of doing the blocking, and ``Prob. 2" is the proportion of the permutations in terms of permuting the blocks so that $(\pi_{\CBP}(i)\leq t, \pi_{\CBP}(j)>t)$ given the configuration. (In this table, $\delta_{ij}$ is shortened as $\delta$ to avoid cumbersome.) \label{tab:ptable1}}
\begin{tabular}{|c|c|c|c||c|c|}
\hline
$B_{1,l}$ & $B_{1,r}$ & $B_{2,l}$ & $B_{2,r}$ & Prob. 1 & Prob. 2 \\ \hline
$i$ & $j$ & & & $\tfrac{\min(b,L-\delta) -(b-\delta)+}{L}$ & $\tfrac{1}{m}$ \\ \hline
$i$ & & $j$ & & $\tfrac{(b-(L-\delta))_+}{L}$ & $\tfrac{(a+1)(m-a-1)}{m(m-1)}$ \\ \hline
& $i$ & $j$ & & $\tfrac{\min(b,L-\delta) -(b-\delta)+}{L}$ & $\tfrac{a(m-a-1)}{m(m-1)}$ \\ \hline
& $i$ & & $j$ &  $\tfrac{(\delta-b)_+}{L}$ & $\tfrac{a(m-a)}{m(m-1)}$ \\ \hline
\end{tabular} 
\end{table}

\begin{table}[h]
\caption{Four configurations of the placement of $i$ and $j$ when $\delta_{ij}<L$, $b>0$ and $i$ on the right of $j$ within $B_1\cup B_2$.  Other notations follow Table \ref{tab:ptable1}. \label{tab:ptable2}}
\begin{tabular}{|c|c|c|c||c|c|}
\hline
$B_{1,l}$ & $B_{1,r}$ & $B_{2,l}$ & $B_{2,r}$ & Prob. 1 & Prob. 2 \\ \hline
$j$ & $i$ & & & $\tfrac{\min(b,L-\delta) -(b-\delta)+}{L}$ & 0 \\ \hline
$j$ & & $i$ & & $\tfrac{(b-(L-\delta))_+}{L}$ & $\tfrac{(a+1)(m-a-1)}{m(m-1)}$ \\ \hline
& $j$ & $i$ & & $\tfrac{\min(b,L-\delta) -(b-\delta)+}{L}$ & $\tfrac{a(m-a) + (m-a-1)}{m(m-1)}$ \\ \hline
& $j$ & & $i$ &  $\tfrac{(\delta-b)_+}{L}$ & $\tfrac{a(m-a)}{m(m-1)}$ \\ \hline
\end{tabular}
\end{table} 

Since 
$\tfrac{1}{m}+\tfrac{a(m-a-1)}{m(m-1)} = \tfrac{a(m-a) + (m-a-1)}{m(m-1)},$
both summations give, for $\delta_{ij}<L$,
\begin{align*}
& (\min(b,L-\delta_{ij}) -(b-\delta_{ij})+)\tfrac{a(m-a-1)+(m-1)}{n(m-1)} \\
& + (b-(L-\delta_{ij}))_+ \tfrac{(a+1)(m-a-1)}{n(m-1)} + (\delta_{ij}-b)_+ \tfrac{a(m-a)}{n(m-1)}. 
\end{align*}
Then, the theorem follows as the result for $\delta_{ij}\geq L$ is a special case of the above expression with $\delta_{ij}$ replaced by $L$.
\end{proof}


\subsection{$\bV_{\CBP}(R_G(t))$}
\label{sec:Vcbp}

For variance, we need to figure out
$\bE_{\CBP}(R_G^2(t)) = \sum_{(i,j), (u,v)\in G} P(g_{\pi_{\CBP}(i)}(t)\neq g_{\pi_{\CBP}(j)}(t), g_{\pi_{\CBP}(u)}(t) \neq \pi_{\pi_{\CBP}(v)}(t)).$
Then, $\bV_{\CBP}(R_G(t))$ follows as $\bE_{\CBP}(R_G^2(t))-(\bE_{\CBP}(R_G(t)))^2$ with the analytic expression for $\bE_{\CBP}(R_G(t))$ provided in Section \ref{sec:Ecbp}.

When $i,j,u,v$ are all different, we need to consider $\binom{4}{2}=6$ index-pairs and whether they could be within a block or not.  This is much more complicated than the calculation in $\bE_{\CBP}(R_G(t))$ where only one index-pair is considered.  In the derivation of $\bE_{\CBP}(R_G(t))$, when $t$ is divisible by $L$ ($b=0$ in the proof), the derivation is much easier.
Therefore, we work out the exact analytic expression for $\bV_{\CBP}(R_G(t))$ for $t$ divisible by $L$, which is already very complicated, and do extrapolations for other $t$'s.  We then compare the result obtained in this way with doing circular block permutation directly.   The following theorem gives exact analytic expression for $\bV_{\CBP}(R_G(t))$ when $t=aL, a=0,\dots,m$.  

\begin{theorem}\label{thm:Vcbp}
For $t=aL, a=1,\dots,m$, we have
\begin{align*}
\bV_{\CBP}(R_G(t)) & = c_1 p_1(a) + c_2 p_2(a) + c_3 p_3(a) - c_0^2 p_1^2(a), 
\end{align*}
where 
\allowdisplaybreaks
\small
\begin{align*}
 p_1&(a)  = \tfrac{2a(m-a)}{m(m-1)}, \\
 p_2&(a)  = \tfrac{1}{2}p_1(a),\\
p_3&(a)  = \tfrac{4a(a-1)(m-a)(m-a-1)}{m(m-1)(m-2)(m-3)}, \\
 c_0 & = \tfrac{1}{L}\sum_{k=1}^L k |\mathcal{E}_k|, \\
 c_1&  = \tfrac{1}{L}\sum_{k=1}^L k |\mathcal{E}_k| \\
& \ + \tfrac{1}{L}\sum_{\scriptsize (i,j),(i,u)\in G;\ j\neq u}\{ (L-\delta_{ju}) I_{h_0(i,j,u)<3,\delta_{ju}<L}  \\
& \hspace{34mm} + \min(\delta_{ij},\delta_{iu}) I_{h_0(i,j,u)=3} I_{\max(\delta_{ij},\delta_{iu},\delta_{ju}) \neq \delta_{ju}} \} \\
& \ + \tfrac{1}{L} \sum_{\tiny\begin{array}{c}(i,j),(u,v)\in G \\ i,j,u,v\text{ all different} \end{array}} \{ I_{h_1(i,j,u,v)=2}((L-\delta_{iu}-\delta_{jv}+x(iu,jv)) I_{\delta_{iu}<L, \delta_{jv}<L} \\
& \hspace{55mm} + (L-\delta_{iv}-\delta_{ju}+x(iv,ju)) I_{\delta_{iv}<L, \delta_{ju}<L}) \\
& \hspace{22mm} + I_{h_1(i,j,u,v)=3} (2L-\delta_{\min,3})_+  \\
& \hspace{29mm} \times((1-O_3)(I_{\delta_{ij}<L,\delta_{uv}\geq L}+I_{\delta_{ij}\geq L,\delta_{uv}<L}) + I_{\delta_{ij}\geq L, \delta_{uv}\geq L}) \\
& \hspace{22mm}+ I_{h_1(i,j,u,v)=4} ((L+\delta_{uv}-\max(\delta_{iu},\delta_{iv},\delta_{ju},\delta_{jv}) I_{\delta_{ij}\geq L}) \\
& \hspace{29mm} +(L+\delta_{ij}-\max(\delta_{iu},\delta_{iv},\delta_{ju},\delta_{jv}))  I_{\delta_{uv}\geq L}) \\
& \hspace{22mm} + I_{h_1(i,j,u,v)\geq 5} (\delta_{uv} I_{\delta_{\max}(i,j,u,v)=\delta_{ij}} + \delta_{ij} I_{\delta_{\max}(i,j,u,v)=\delta_{uv}} \\
& \hspace{29mm} + \delta_{jv} I_{\delta_{\max}(i,j,u,v)=\delta_{iu}} I_{\delta_{ij}=\delta_{iv}+\delta_{jv}}  \\
& \hspace{29mm} + \delta_{ju} I_{\delta_{\max}(i,j,u,v)=\delta_{iv}} I_{\delta_{ij}=\delta_{iu}+\delta_{ju}}  \\
& \hspace{29mm} + \delta_{iv}  I_{\delta_{\max}(i,j,u,v)=\delta_{ju}} I_{\delta_{ij}=\delta_{iv}+\delta_{jv}} \\
& \hspace{29mm} + \delta_{iu} I_{\delta_{\max}(i,j,u,v)=\delta_{jv}} I_{\delta_{ij}=\delta_{iu}+\delta_{ju}} ) \}\\
c_2 & = \tfrac{1}{L} \sum_{\scriptsize (i,j),(i,u)\in G;\ j\neq u} \{ L I_{h_0(i,j,u)=0} + \min(\delta_{ij},\delta_{iu},\delta_{ju}) I_{h_0(i,j,u)=1} \\
& \hspace{33mm} + (\max(\delta_{ij},\delta_{iu},\delta_{ju})-L) I_{h_0(i,j,u)=2} \} \\
& \ + \tfrac{1}{L}\sum_{\tiny \begin{array}{c}(i,j),(u,v)\in G \\ i,j,u,v\text{ all different} \end{array}} \{ I_{h_1(i,j,u,v)=1} (L-\delta_{\min}(i,j,u,v)) I_{\delta_{ij}\geq L} I_{\delta_{uv}\geq L}   \\
& \hspace{22mm}  + I_{h_1(i,j,u,v)=2}((L-\delta_{iu})_+ + (L-\delta_{iv})_+ + (L-\delta_{ju})_+ + (L-\delta_{jv})_+ \\
& \hspace{29mm} + (2\delta_{iu}+2\delta_{jv}-2L-2x(iu,jv)) I_{\delta_{iu}<L, \delta_{jv}<L} \\
& \hspace{29mm} + (2\delta_{iv}+2\delta_{ju}-2L-2x(iv,ju)) I_{\delta_{iv}<L, \delta_{ju}<L} ) \\
& \hspace{22mm}  + I_{h_1(i,j,u,v)=3}( (L-\min(\delta_{iu},\delta_{iv},\delta_{ju},\delta_{jv})) I_{\delta_{ij}<L,\delta_{uv}<L} \\
& \hspace{29mm} + I_{\delta_{ij}<L,\delta_{uv}\geq L}(\delta_{ij}- (1-O_3)|\delta_{uv}-2L|) \\
& \hspace{29mm} + I_{\delta_{ij}\geq L,\delta_{uv}< L}(\delta_{uv}-(1-O_3)|\delta_{ij}-2L|) \\
& \hspace{29mm} + I_{\delta_{ij}\geq L,\delta_{uv}\geq L}(\delta_{\min,3}(i,j,u,v)-L-2(\delta_{\min,3}(i,j,u,v)-2L)_+ \\
& \hspace{29mm} + I_{\delta_{jv}\geq L}\delta_{iu} + I_{\delta_{ju}\geq L}\delta_{iv} + I_{\delta_{iu}\geq L}\delta_{ju} + I_{\delta_{iv}\geq L}\delta_{ju} ) )\\
& \hspace{22mm} + I_{h_1(i,j,u,v)=4}((\max(\delta_{iu},\delta_{iv},\delta_{ju},\delta_{jv})-L) (I_{\delta_{ij}\geq L} + I_{\delta_{uv}\geq L}) \\
& \hspace{29mm} + \delta_{uv} (I_{\delta_{iu},\delta_{iv}\geq L} + I_{\delta_{ju},\delta_{jv}\geq L}  ) \\
& \hspace{29mm} + \delta_{ij} (I_{\delta_{iu},\delta_{ju}\geq L} +I_{\delta_{iv},\delta_{jv}\geq L}  )) \\
& \hspace{22mm} + I_{h_1(i,j,u,v)=5}(\max(\delta_{iu},\delta_{iv},\delta_{ju},\delta_{jv})-L)_+ \} \\
c_3 & = \tfrac{1}{L} \sum_{\tiny \begin{array}{c}(i,j),(u,v)\in G \\ i,j,u,v\text{ all different} 
\end{array}} \{ L I_{h_1(i,j,u,v)=0}  + \delta_{\min}(i,j,u,v)  I_{h_1(i,j,u,v)=1}  \\
& \hspace{22mm} + I_{h_1(i,j,u,v)=2}(\delta_{\min,2}(i,j,u,v) -L \\
& \hspace{29mm}  + I_{\delta_{ij}<L, \delta_{uv}<L} (x_{ij,uv}-\delta_{\min,2}(i,j,u,v) + L) \\
& \hspace{29mm}  + I_{\delta_{iu}<L, \delta_{jv}<L} (x_{iu,jv}-\delta_{\min,2}(i,j,u,v) + L) \\
& \hspace{29mm} + I_{\delta_{iv}<L,\delta_{ju}<L} (x_{iv,ju} -\delta_{\min,2}(i,j,u,v) + L)) \\
& \hspace{22mm} + I_{h_1(i,j,u,v)=3}(\delta_{\min,3}-2L)_+ \}
\end{align*}
with
\begin{align*}
h_0(i,j,u) & = I_{\delta_{ij}<L} + I_{\delta_{iu}<L}+ I_{\delta_{ju}<L}, \\
h_1(i,j,u,v) &= I_{\delta_{ij}<L} + I_{\delta_{iu}<L} + I_{\delta_{iv}<L} + I_{\delta_{ju}<L} + I_{\delta_{jv}<L} + I_{\delta_{uv}<L},\\
\delta_{\max}(i,j,u,v) & = \max(\delta_{ij},\delta_{iu},\delta_{iv},\delta_{ju},\delta_{jv},\delta_{uv}),\\
\delta_{\min}(i,j,u,v) & = \min(\delta_{ij},\delta_{iu},\delta_{iv},\delta_{ju},\delta_{jv},\delta_{uv}),\\
\delta_{\min,2}(i,j,u,v) & = \text{sum of the two smallest values among }\{\delta_{ij},\delta_{iu},\delta_{iv},\delta_{ju},\delta_{jv},\delta_{uv}\},\\
\delta_{\min,3}(i,j,u,v) & = \text{sum of the three smallest values among }\{\delta_{ij},\delta_{iu},\delta_{iv},\delta_{ju},\delta_{jv},\delta_{uv}\},\\
s(i,j) & = \left\{\begin{array}{ll} \min(i,j) & \text{ if $|i-j|<L$}, \\ \max(i,j) & \text{ if $n-|i-j|<L$}. \end{array}  \right. \\
\delta_{ij,uv} & = s(u,v)-s(i,j), \\
b_{ij,uv} & = \delta_{ij,uv} \text{ mod } L, \\
x_{ij,uv} & = (\min(\delta_{ij}, b_{ij,uv}+\delta_{uv})-b_{ij,uv})_+ + (\min(\delta_{ij},b_{ij,uv}+\delta_{uv}-L))_+, \\
O_3 & = I_{\delta_{ij},\delta_{iu},\delta_{ju}<L} + I_{\delta_{ij},\delta_{iv},\delta_{jv}<L} + I_{\delta_{uv},\delta_{iu},\delta_{iv}<L} + I_{\delta_{uv},\delta_{ju},\delta_{jv}<L}.
\end{align*}
\end{theorem}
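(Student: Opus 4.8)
The plan is to reduce the computation of $\bE_{\CBP}(R_G^2(aL))$ to the ordinary permutation second moment of the edge-count statistic, applied to a graph on the $m$ blocks. The key observation is that when $t=aL$ the cut falls exactly on a block boundary, so whether an edge $(i,j)\in G$ crosses the cut depends only on which block contains $\by_i$ and which contains $\by_j$, not on their within-block positions. Moreover the circular block permutation factorizes: a blocking ``phase'' $r\in\{0,\dots,L-1\}$ is selected first (the random starting point lands in each phase with probability $m/n=1/L$), and the $m$ blocks are then permuted uniformly. Conditionally on a phase $r$, I would collapse each block to a single super-node and, for every between-block edge of $G$, record a super-edge joining the two super-nodes, discarding within-block edges since they can never cross a boundary cut. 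Then $R_G(aL)$ equals the number of super-edges running between the first $a$ and the last $m-a$ super-nodes under a uniform permutation of the $m$ super-nodes, i.e.\ exactly the edge-count statistic of \cite{chen2015graph} with sample sizes $a$ and $m-a$. Averaging over the $L$ equally likely phases gives $\bE_{\CBP}(R_G^2(aL))=\frac1L\sum_{r=0}^{L-1}\bE_{\per}\big(R_{G'_r}^2(a)\big)$ and likewise for the first moment, so that $\bV_{\CBP}=\frac1L\sum_r\bE_\per(R^2_{G'_r}(a))-\big(\frac1L\sum_r\bE_\per(R_{G'_r}(a))\big)^2$; by the remark after Theorem \ref{thm:Ecbp} the subtracted term is exactly $(\bE_{\CBP}(R_G(aL)))^2=c_0^2p_1^2(a)$.

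Next I would record the elementary joint-crossing probabilities under a uniform permutation of the $m$ super-nodes, classified by how many distinct super-nodes the two super-edges touch together. A single super-edge crosses with probability $p_1(a)=\frac{2a(m-a)}{m(m-1)}$. Two super-edges touching exactly three distinct super-nodes (sharing one endpoint) cross jointly with probability $\frac{a}{m}\frac{(m-a)(m-a-1)}{(m-1)(m-2)}+\frac{m-a}{m}\frac{a(a-1)}{(m-1)(m-2)}=p_2(a)=\tfrac12 p_1(a)$, two vertex-disjoint super-edges touching four distinct super-nodes cross jointly with probability $p_3(a)=\frac{4a(a-1)(m-a)(m-a-1)}{m(m-1)(m-2)(m-3)}$, and two parallel super-edges touching only two super-nodes cross jointly with probability $p_1(a)$. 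These three cases are exactly the ingredients behind the permutation second-moment formula of \cite{chen2015graph}, and they explain why only $p_1(a),p_2(a),p_3(a)$ occur in the statement.

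The heart of the argument is then to expand $\frac1L\sum_r\bE_\per(R^2_{G'_r}(a))$ as a sum over ordered pairs of edges of $G$ and to compute, for each pair, the number of phases realizing each block configuration. I would organize this by the number of distinct vertices involved. The diagonal $e=f$ contributes $\frac1L\sum_k k|\mathcal{E}_k|=c_0$ to the coefficient of $p_1(a)$. Pairs of edges sharing a common vertex $i$ span at most three blocks, hence contribute only to the coefficients of $p_1(a)$ and $p_2(a)$, classified by $h_0(i,j,u)$. Pairs of edges on four distinct vertices may span two, three, or four blocks and contribute to all of $p_1(a),p_2(a),p_3(a)$, classified by $h_1(i,j,u,v)$; in particular $c_3$ receives contributions only from this four-vertex case. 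For a fixed pair the required quantity is purely combinatorial: as the phase runs over its $L$ values, I must count how many phases place the vertices into exactly two, three, or four distinct blocks with the relevant edges between-block. These counts are governed by the circular gaps $\delta_{ij}$ together with the offsets between the ``merged'' positions of the two edges, which is precisely what the bookkeeping functions $s(\cdot,\cdot)$, $\delta_{ij,uv}$, $b_{ij,uv}$, $x_{ij,uv}$ and $O_3$ encode.

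The main obstacle is this last counting step for the four-distinct-vertex pairs: with six pairwise circular distances one must split into numerous sub-cases according to which distances lie below $L$, which vertex shares a block with which, and which distance is largest or smallest, and in each sub-case express the phase-count in closed form. This is the source of the long case split over $h_1\in\{0,1,2,3,4,5,\ge 5\}$ in the statement, and it is where essentially all of the work and all of the apparent complexity reside. Once each sub-case count is verified, collecting the coefficients of $p_1(a)$, $p_2(a)$ and $p_3(a)$ produces $c_1$, $c_2$ and $c_3$, and subtracting $c_0^2p_1^2(a)$ completes the proof.
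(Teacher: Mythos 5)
Your proposal is correct and follows essentially the same route as the paper's proof: the paper likewise factorizes the circular block permutation into a uniformly chosen blocking (one of the $L$ phases, each of probability $1/L$) followed by a uniform permutation of the $m$ blocks, uses the block-level joint crossing probabilities $p_1(a)$, $p_2(a)$, $p_3(a)$, and obtains $c_1$, $c_2$, $c_3$ by counting, for each pair of edges classified by shared vertices and by which pairwise circular distances fall below $L$, the fraction of phases realizing each block configuration. Your super-node/super-edge packaging is only cosmetic, and the phase-counting case analysis you defer is exactly the enumeration (over $h_0$ and $h_1$ configurations) that constitutes the body of the paper's proof.
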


\normalsize
The complete proof of this Theorem is in Appendix \ref{sec:Vcbp_Proof}. 

\begin{remark}\label{remark:coef}
For $c_0$, a high level explanation is that it is the average of the number of edges whose end nodes appear in different blocks over all $L$ possible ways of blocking\footnote{Based on the recipe of the circular block permutation, the blocks resulted from a random starting point at $\by_t$ are the same as the blocks resulted from a random starting point at $\by_{t+L}$.  Hence, there are only $L$ different ways of blocking.}.  Let $\omega$ represents one of such blockings and $\Omega$ be the set of all $L$ ways of blockings, then $c_0 = \frac{1}{L} \sum_{\omega\in\Omega} c_0(\omega)$.  The other three coefficients $c_1$, $c_2$, and $c_3$, involve two edges.  The two edges do not need to be distinct, i.e., they could degenerate to be the same edge; or the two edges could share the same node.  Then, under a certain blocking $\omega$, $c_1(\omega)$ is the number of pairs of edges whose end nodes only appear in two distinct blocks with both edges having their end nodes appearing in different blocks, $c_2(\omega)$ is the number of pairs of edges whose end nodes appear in three distinct blocks with both edges having their end nodes appearing in different blocks, and $c_3(\omega)$ is the number of pairs of edges whose end nodes appear in four distinct blocks.  It is not hard to see that $c_1(\omega) + c_2(\omega) + c_3(\omega) = c_0^2(\omega)$.  Then, 
$$c_1 + c_2 + c_3 = \tfrac{1}{L} \sum_{\omega\in\Omega} c_0^2(\omega) \geq \left(\tfrac{1}{L} \sum_{\omega\in\Omega} c_0(\omega)\right)^2 = c_0^2.$$

When $L=1$, the equality always holds.  Indeed, when $L=1$, the coefficients can be simplified to be $c_0=c_1=|G|$, $c_2 = \sum_{i=1}^n |G_i|^2 - 2|G|$ and $c_3 = |G|^2 -\sum_{i=1}^n |G_i|^2 + |G|$, where $|G|$ is the number of edges in the graph $G$, and $G_i$ is the subgraph in $G$ that connect to node $\by_i$.  So $|G_i|$ is the degree of node $\by_i$.  It is clear that $c_1+c_2+c_3 = |G|^2 = c_0^2$.  However, for $L>1$, the equality usually does not hold unless under very special cases that $c_0(\omega)$ is the same for all $\omega\in \Omega$.
\end{remark}

In Theorem \ref{thm:Vcbp}, we provide the exact analytic expression for $\Var_\CBP(R(t))$ when $t$ is divisible by $L$.  Unless $L=1$ that the coefficients, $c_1$, $c_2$ and $c_3$, could be greatly simplified, the expressions for these coefficients are in general fairly complicated for $L>1$.  It can be imagined that the exact analytic expression for $\Var_\CBP(R(t))$ when $t$ is not divisible by $L$ would be much more complicated.   In addition, the computation time for these coefficients when $L>1$ is not negligible as it needs to do some complicated counting.  If there are special structures of the graph, the expression might be simplified using the structures.  While for a general graph $G$, the analytic expression could not be further simplified as two edges are involved in computing the variance and we need to consider all the possible combinations of whether the $\binom{4}{2}=6$ pairwise index differences are smaller than $L$ or not when the four end points of the two edges are distinct.   In a typical run for a 1,000-length sequence with local dependence, the computation time for getting all the coefficients with $L=5$ is about 8 second on the 12-inch MacBook (2015), which is acceptable; while the not-derived analytic expressions for the variance under CBP at $t=aL+b, 0<b<L$ would be much more complicated (the magnitude of the complication different can be inferred from the exact analytic expression of $\bE_\CBP(R(t))$ in Section \ref{sec:Ecbp} under $b=0$ and $b>0$) and require much more time to compute.  Combining all the factors, we propose to fill-in $\bV_{\CBP}(R_G(t))$ at $t$ not divisible by $L$ by extrapolating from the values at $t=aL, a=0,1,\dots,m$.

\begin{figure}[h]  
\includegraphics[width=0.48\textwidth]{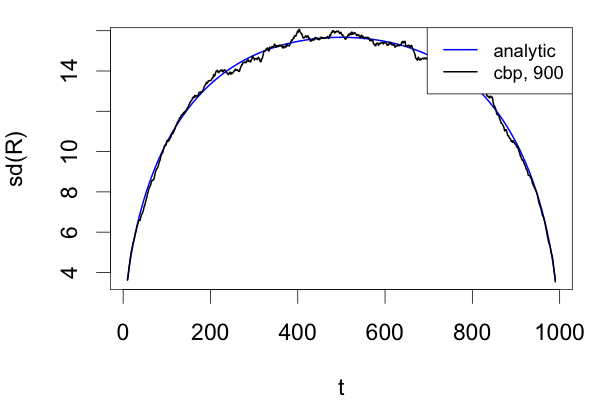} 
\includegraphics[width=.48\textwidth]{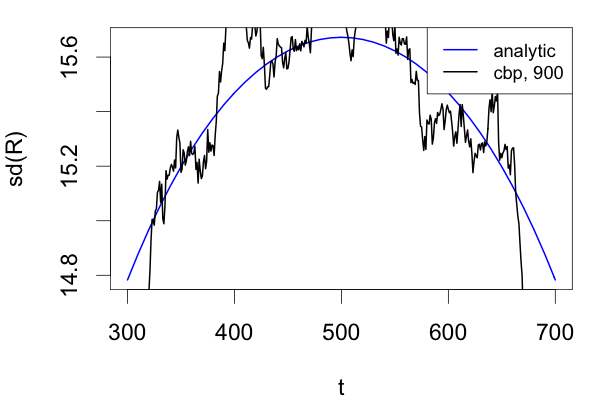}
\includegraphics[width=0.48\textwidth]{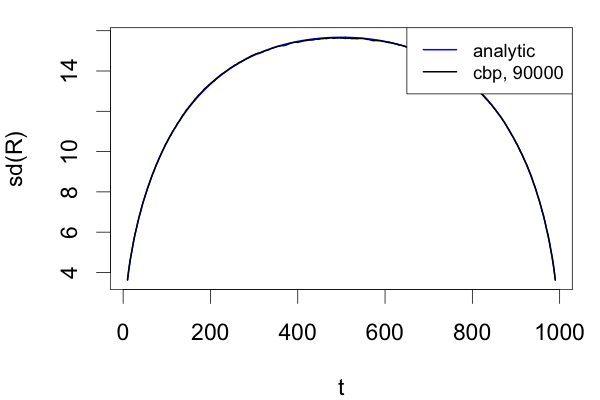}
\includegraphics[width=0.48\textwidth]{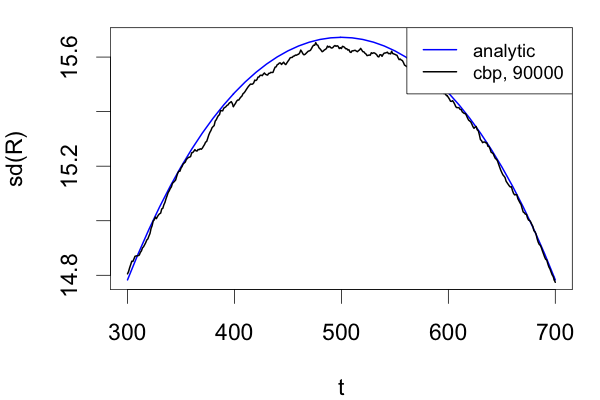}
\caption{Standard deviation of $R_G(t)$ under circular block permutation from the analytic expression given in Theorem \ref{thm:Vcbp} (the values at $t=aL+b, 0<b<L$ are extrapolated) and from doing circular block permutation directly.  The left panels are plotting the whole line and the right panels are plotting the middle part of the sequence.  In each plot, the blue line is based on the analytic formula and the black line is based on directly doing circular block permutation with 900 CBPs for the top panels and 90,000 CBPs for the bottom panels. \label{fig:sdR}}
\end{figure}

Figure \ref{fig:sdR} shows the standard deviation of $R_G(t)$ under circular block permutation ($\text{SD}_{\CBP}(R_G(t))$) for a 1000-length sequence with local dependence.  The blue line in each plot is based on the analytic formula provided in Theorem \ref{thm:Vcbp} with the values at $t=aL+b, 0<b<L$ filled in by extrapolation, and the black line in each plot is based on 900 CBPs (top panels) and 90,000 CBPs (bottom panels).  Here, 900 CBPs were chosen as it uses a similar amount of time to that by computing the standard deviation based on Theorem \ref{thm:Vcbp} and extrapolation for this sequence.  We can see clearly from the top panels that 900 CBPs is not enough as the results are fluctuating widely.  It is important to have a good estimate of $\text{SD}_{\CBP}(R_G(t))$ as it standardize the raw statistic $R_G(t)$ and a bad estimate could lead to a bad estimate of the change-point location.  When we increase the number of CBPs to 90,000 (using 100-fold times as for getting the analytic results), the values based on CBPs directly is much better (bottom left panel).  However, if we zoom into the middle part of the sequence, we could still see the black line wiggling around, which could cause inaccurate estimate of the change-point location.  This toy example shows that we would need even more number of CBPs to get an estimate that is as good as those from the analytic formula with extrapolation.  

Therefore, we recommend to use the analytic formula given in Theorem \ref{thm:Vcbp} to get exact values at $t=aL, a=0,1,\dots,m$ and use extrapolation to get values at $t=aL+b,0<b<L$.  This approach gives us accurate enough estimate for $\bV_{\CBP}(R_G(t))$ with a reasonable fast enough computing time.  In the following, $Z_{G,\CBP}(t)$ is defined with $\bV_{\CBP}(R_G(t))$ computed in this recommended way.

\section{Analytic $p$-value approximations}
\label{sec:pvalue}
Now, we have a relative fast analytic way to compute the standardized statistic $Z_{G,\CBP}(t)$.  The next question is how large the scan statistic 
$$\max_{n_0\leq t\leq n_1} Z_{G,\CBP}(t)$$
needs to be to constitute sufficient evidence against the null hypothesis of homogeneity, i.e., we are concerned with the tail probability of the scan statistics under $H_0$:
\begin{align}\label{eq:tailp}
\bP(\max_{n_0\leq t\leq n_1} Z_{G,\CBP}(t)>b).
\end{align}

To obtain this tail probability, we can directly perform the circular block permutation, which would be time consuming in obtaining a reasonably accurate estimate.  Therefore, we seek to derive an analytic expression for this tail probability.  In the rest of this section, we first study the asymptotic distribution of the process $\{Z_{G,\CBP}(t)\}$.  We derive approximate analytic expression for the tail probability for the limiting process and then refine the approximation to work for finite $n$.

\subsection{Asymptotic property of the process}  
\label{sec:asymp}

Here, we derive the limiting distribution of $\{Z_{G,\CBP}([mw]L): \epsilon \leq w\leq 1-\epsilon \}$ for any $0<\epsilon <0.5$ under circular block permutation.  
We first introduce some notations.  We write the nodes connected by an edge $e$ be $e_-$ and $e_+$ with $e_-<e_+$. 
For node $i$ and edge $e=(e_-,e_+)$, let
\allowdisplaybreaks
\begin{align*}
A_{e,L,0} & = \{e^*: \min(\delta(e^*_-, e_-), \delta(e^*_-,e_+), \delta(e^*+,e_-), \delta(e^*_+,e_+))<L\}, \\
A_{e,L,1} & = A_{e,L,0} \cup\left(\bigcup_{\{e^\prime: e^\prime \in G_{e^*_-}\cup G_{e^*_+}, \forall  e^*\in A_{e,L,0}\}} A_{e^\prime,L,0}\right), \\
A_{e,L,2} & = A_{e,L,1} \cup \left(\bigcup_{\{e^\prime: e^\prime \in G_{e^*_-}\cup G_{e^*_+}, \forall e^*\in A_{e,L,1}\}} A_{e^\prime,L,1}\right), \\
A_{i,L,0} & = \{e^*: \min(\delta(e^*_-,i), \delta(e^*_+,i))<L\}, \\
A_{i,L,1} & = A_{i,L,0} \cup \left(\bigcup_{\{e^\prime:e^\prime \in G_{e^*_-}\cup G_{e^*_+}, \forall e^*\in A_{e,L,0}\}} A_{e^\prime,L,0} \right), \\
A_{i,L,2} & = A_{i,L,1} \cup \left( \bigcup_{\{e^\prime: e^\prime \in G_{e^*_-}\cup G_{e^*_+}, \forall e^*\in A_{e,L,1} \}} A_{e^\prime,L,1}\right).
\end{align*}
Here, $A_{e,L,0}$ is the set of edges whose end nodes could be within the same block with any of the end nodes in $e$ under some circular block permutations.  This can be viewed as the neighbors of $e$.  Then, $A_{e,L,1}$ is the set of edges whose end nodes could be within the same block with any of the end nodes of the edges in $A_{e,L,0}$, so $A_{e,L,1}$ can be viewed as the set containing neighbors of $A_{e,L,0}$.  Similarly, $A_{e,L,2}$ can be viewed as the set containing neighbors of $A_{e,L,1}$.  The other three sets, $A_{i,L,0}$, $A_{i,L,1}$ and $A_{i,L,2}$, are defined similarly but initiated from a node $i$.

For a certain block $i$, let $D_i$ be the number of edges in $G$ that connect a node in block $i$ to another node not in block $i$ under the blocking that block $i$ exists.  Let $\bE_\Omega$ be the expectation that places probability $\frac{1}{L}$ on each $\omega\in\Omega$ with $\Omega$ defined in Remark \ref{remark:coef}.  In the following, we write $a_n = O(b_n)$ when $a_n$ has the same order as $b_n$, and write $a_n=o(b_n)$ when $a_n$ has order smaller than $b_n$.
The limiting distribution of the stochastic process needs the following conditions.
\begin{condition}
	$|G|=O(n^\alpha), 1\leq \alpha< \tfrac{8}{7}$; $\sum_{e\in G} |A_{e,L,1}||A_{e,L,2}| = o(n|G|^{\frac{1}{2}})$; $\sum_{i=1}^n |A_{i,L,1}| |A_{i,L,2}| = o(n^{1.5})$.
\end{condition}
\begin{condition}
	$\bE_\Omega(\sum_{i=1}^m D_i^2) - \frac{1}{m} (\bE_\Omega(\sum_{i=1}^m D_i))^2 = O(\bE_\Omega(\sum_{i=1}^m D_i^2))$.
\end{condition}
\begin{condition}
	$\bE_\Omega(\sum_{i=1}^m D_i^2) = O(|G|)$.
\end{condition}

\begin{theorem}\label{thm:asym}
Under Conditions 1 and 2, or under Conditions 1 and 3,  as $n\rightarrow\infty$, $\forall \epsilon\in(0,0.5)$, $\{Z_{\CBP}([mw]L): \epsilon < w < 1-\epsilon \} $ converges in finite dimensional distributions to a Gaussian process, which we denote as $\{Z^\star_{\CBP}(w): \epsilon < w < 1-\epsilon \}$.
\end{theorem}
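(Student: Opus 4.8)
The plan is to prove convergence of finite-dimensional distributions by the Cramér--Wold device, so it suffices to show that for any $\epsilon<w_1<\cdots<w_K<1-\epsilon$ and any $(\lambda_1,\dots,\lambda_K)\in\mathbb{R}^K$ the linear combination $\sum_{k=1}^K \lambda_k Z_{G,\CBP}([mw_k]L)$ is asymptotically normal, and then to identify the limit by computing $\lim_n \bCov_\CBP\!\big(Z_{G,\CBP}([mw_k]L),Z_{G,\CBP}([mw_{k'}]L)\big)$. Because each $t_k=[mw_k]L$ is divisible by $L$, Theorems~\ref{thm:Ecbp} and~\ref{thm:Vcbp} supply the exact centering and scaling, so after centering the object of interest is a single edge sum $R_G(t_k)-\bE_\CBP(R_G(t_k))=\sum_{e\in G}\big(I_{e\text{ crosses }t_k}-\bP(e\text{ crosses }t_k)\big)$, and the whole linear combination is a bounded functional of the circular block permutation $\pi_\CBP$.

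First I would tame the global coupling of $\pi_\CBP$ by conditioning on the blocking $\omega\in\Omega$. Given $\omega$, only the uniform permutation of the $m$ blocks remains, and $R_G(aL)$ is exactly the edge-count scan statistic of \cite{chen2015graph} on the $m$-vertex \emph{block multigraph} $G^\omega$ whose vertex $i$ has degree $D_i$ and whose total edge count is $c_0(\omega)$. Conditionally on $\omega$ I would therefore invoke the Gaussian-process limit for the edge-count scan statistic established in \cite{chen2015graph}, whose hypotheses are degree-type and neighbourhood-type conditions on $G^\omega$. The nested sets $A_{e,L,0}\subset A_{e,L,1}\subset A_{e,L,2}$, and their node analogues $A_{i,L,\cdot}$, are precisely the devices that translate those block-level conditions back to the original sequence: $A_{e,L,0}$ is the dependency neighbourhood of the indicator $I_{e\text{ crosses}}$, since two edges interact only if their endpoints can fall in a common block, while $A_{e,L,1}$ and $A_{e,L,2}$ are the first and second expansions required for the third-moment (Stein) remainder in a local-dependence normal approximation. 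Condition~1, through $|G|=O(n^\alpha)$ with $\alpha<\tfrac{8}{7}$ and $\sum_e|A_{e,L,1}||A_{e,L,2}|=o(n|G|^{1/2})$, $\sum_i|A_{i,L,1}||A_{i,L,2}|=o(n^{1.5})$, is exactly what makes this remainder vanish once divided by the cube of the standard deviation.

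Removing the conditioning is where the real work lies. Writing $R_G(aL)-\bE_\CBP(R_G(aL))=\big(R_G(aL)-\bE_\CBP(R_G(aL)\mid\omega)\big)+p_1(a)\big(c_0(\omega)-c_0\big)$, the first term is the conditionally Gaussian fluctuation and the second is a between-blocking shift taking only the $L$ values $\{p_1(a)(c_0(\omega)-c_0):\omega\in\Omega\}$; across the $K$ times this shift equals the single random variable $c_0(\omega)-c_0$ times the deterministic vector $(p_1(a_1),\dots,p_1(a_K))$, so a finite mixture with distinct means would fail to be Gaussian. The plan is to use the law of total variance $\bV_\CBP(R_G(aL))=\bE_\Omega\big[\bV_\CBP(R_G(aL)\mid\omega)\big]+p_1^2(a)\,\tfrac{1}{L}\sum_{\omega}(c_0(\omega)-c_0)^2$, together with the identity $c_1+c_2+c_3=\tfrac1L\sum_\omega c_0^2(\omega)$ of Remark~\ref{remark:coef}, to show that under either Condition~2 or Condition~3 the within-permutation variance $\bE_\Omega[\bV_\CBP(\cdot\mid\omega)]$, controlled through $\bE_\Omega(\sum_i D_i^2)$, dominates the between-blocking term, so that $p_1(a)(c_0(\omega)-c_0)/\sqrt{\bV_\CBP}\to 0$ and every conditional Gaussian limit collapses to a common centered law. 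The tower property then upgrades the conditional finite-dimensional CLT to the unconditional one, and the covariance of $\{Z^\star_\CBP(w)\}$ is read off from the scaled limits of $\bE_\Omega[\bCov_\CBP(\cdot\mid\omega)]$.

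I expect the main obstacle to be establishing the non-degenerate \emph{variance lower bound} and the negligibility of the blocking randomness at the same time. Remark~\ref{remark:coef} shows $c_1+c_2+c_3\ge c_0^2$ with near-equality possible, so $\bV_\CBP$ could in principle be dominated by the between-blocking term $\tfrac1L\sum_\omega(c_0(\omega)-c_0)^2$, in which case the standardized process would converge to a non-Gaussian $L$-point mixture rather than to $\{Z^\star_\CBP(w)\}$; ruling this out, and thereby pinning $\bV_\CBP$ at the order of the within-permutation variance, is the delicate step for which Conditions~2 and~3 serve as the two alternative hypotheses. Once this lower bound is secured, feeding it into the Stein remainder controlled by Condition~1 and verifying the limiting covariance should be routine.
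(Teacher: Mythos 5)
Your strategy is genuinely different from the paper's. The paper never conditions on the blocking $\omega$: it replaces the joint permutation of the blocks by the circular block bootstrap (CBB), in which each block's position is drawn i.i.d.\ uniformly, proves joint asymptotic normality of the $Z_{G,\CBB}(t_k)$ together with the occupancy statistics $X_{\CBB}(t_k)$ by Stein's method for locally dependent variables (Lemma \ref{thm:asym_cbb}, where Condition 1 and the sets $A_{e,L,\cdot}$, $A_{i,L,\cdot}$ enter), and then recovers CBP as CBB conditioned on $X_{\CBB}(t_1)=\cdots=X_{\CBB}(t_K)=0$; Conditions 2 and 3 are used only in Lemma \ref{thm:cbbp}, to show that $\bV_{\CBB}(R_G(t_k))/\bV_{\CBP}(R_G(t_k))$ tends to a constant and that the CBB--CBP mean shift is negligible after standardization. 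Your plan --- condition on $\omega$, view $R_G(aL)$ as the edge-count statistic of \cite{chen2015graph} on the block multigraph, then de-condition --- is a different decomposition, and it has the merit of making the between-blocking randomness explicit rather than leaving it inside the bootstrap.

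However, the step you yourself flag as delicate is a genuine gap, and Conditions 2 and 3 cannot close it. Take $L=2$ and $G$ the perfect matching $\{(2j-1,2j): j=1,\ldots,n/2\}$, so every edge has $\delta_e=1$. There are only two blockings: under $\omega_{\mathrm{odd}}$ every edge lies inside a block, so $c_0(\omega_{\mathrm{odd}})=0$ and $R_G(aL)\equiv 0$; under $\omega_{\mathrm{even}}$ every edge is cut, $c_0(\omega_{\mathrm{even}})=n/2$, and $R_G(aL)$ concentrates at $\tfrac{n}{2}p_1(a)$ with $O(\sqrt{n})$ fluctuations. Condition 1 holds ($|G|=n/2$, all the $A$-sets have $O(1)$ size), and since $\bE_\Omega(\sum_i D_i^2)=2m$ while $\tfrac1m\bigl(\bE_\Omega\sum_i D_i\bigr)^2=m$ and $|G|=m$, Conditions 2 and 3 hold as well. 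Yet the between-blocking variance $p_1^2(a)\tfrac1L\sum_\omega(c_0(\omega)-c_0)^2\asymp n^2$ dominates the within-blocking variance $O(n)$, so $Z_{G,\CBP}(aL)$ converges to the two-point law $\pm1$, not to a Gaussian; the same computation also destroys the homogenization of conditional variances that your tower-property step tacitly requires. Hence the negligibility of $p_1(a)(c_0(\omega)-c_0)/\sqrt{\bV_{\CBP}(R_G(aL))}$ must be imposed as an additional hypothesis (something like $\sum_\omega(c_0(\omega)-c_0)^2=o(c_1)$); it is not a consequence of Conditions 1--3. For what it is worth, the paper's own proof stumbles on the same example: under CBB with a random starting point, the split indicators of far-apart short edges are all functions of the starting point $k_0$, so $\xi_e$ fails to be independent of $\xi_{K_e^c}$ and Assumption \ref{assump:LD} is violated, which is exactly the obstruction your decomposition makes visible. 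So you have correctly isolated the crux, but neither your proposed use of Conditions 2--3 nor the paper's Stein step actually resolves it.
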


The complete proof of the theorem is in Appendix \ref{sec:asym_Proof}.  It utilizes a proof technique that is similar to that in \cite{chen2015graph}.

\begin{remark}
	Condition 1 under $L=1$ are relaxed versions of the conditions for the limiting distribution under permutation null distribution in \cite{chen2015graph}\footnote{We do not consider $|G|=O(n^\alpha), 0<\alpha<1$ here as such a graph only makes use of a tiny portion of the observations and is not effective in practice.}.  For Conditions 2 and 3, we only need one of them to hold.  Condition 2 says that the graph shall not be flat: $D_i$'s shall not be of the same order across $i$'s.  However, when $|G|=O(n)$, such a flat graph would also be acceptable as Condition 3 holds.
\end{remark}

%

\subsection{Analytic formulas} \label{sec:pval_formula}
We now examine the asymptotic behavior of the tail probability \eqref{eq:tailp}.  Our approximation require the function $\nu(x)$ defined as 
$\nu(x) = 2x^{-2}\exp\left\{-2\sum_{i=1}^\infty \tfrac{1}{i}\Phi\left(-\tfrac{1}{2}x\sqrt{i}\right)\right\}, x>0$
This function is closely related to the Laplace transform of the overshoot over the boundary of a random walk.  A simple approximation given in \cite{siegmund2007statistics} is sufficient for numerical purpose:
$\nu(x) \approx \tfrac{(2/x)(\Phi(x/2)-0.5)}{(x/2)\Phi(x/2)+\phi(x/2)}.$

Based on Theorem \ref{thm:asym}, under Conditions 1 and 2 (or 3), following similar arguments in \cite{chen2015graph}, we have as $n\rightarrow\infty$, for $b=O(\sqrt{n})$, $n_0, n_1 = O(n)$,
\begin{align}\label{eq:pval1} 
 P(\max_{n_0\leq t\leq n_1} Z_{\CBP}(t)>b)
 \sim b\phi(b) \sum_{n_0\leq t\leq n_1} C(t) \nu\left(\sqrt{2b^2 C(t) } \right),
\end{align}
where $C(t) =\tfrac{1}{L}\left. \tfrac{\partial \rho(s,t)}{\partial s}\right|_{s=t},$
with $\rho(s,t) \equiv \bCov_{\CBP}(Z_{\CBP}(s), Z_{\CBP}(t)). $ 
The following theorem gives an analytic expression for $\bCov_{\CBP}(R_G(t_1),R_G(t_2))$ at $t_1$ and $t_2$ divisible by $L$.

\begin{theorem}\label{thm:Cov_cbp}
For $t_1=a_1L < t_2=a_2L$ where $a_1, a_2\in \{0,1,\dots,m\}$, we have
\begin{align*}
& \bCov_{\CBP}(R_G(t_1), R_G(t_2))  \\
&\hspace{10mm} = c_1 q_1(a_1,a_2) + c_2 q_2(a_1,a_2) + c_3 q_3(a_1,a_2) - c_0^2 p_1(a_1)p_1(a_2), 
\end{align*}
with $c_0$, $c_1$, $c_2$, $c_3$ provided in Theorem \ref{thm:Vcbp}, and
\begin{align*}
q_1(a_1,a_2) & = \tfrac{2a_1(m-a_2)}{m(m-1)}, \\
q_2(a_1,a_2) & = \tfrac{a_1(m-a_2)(m-2a_1+2a_2-2)}{m(m-1)(m-2)}, \\
q_3(a_1,a_2) & = \tfrac{4a_1(m-a_2)\{(a_1-1)(m-a_1-1)+(a_2-a_1)(m-a_1-2)\}}{m(m-1)(m-2)(m-3)}.
\end{align*}
\end{theorem}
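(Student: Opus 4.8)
The plan is to expand
$$\bCov_{\CBP}(R_G(t_1), R_G(t_2)) = \bE_{\CBP}(R_G(t_1) R_G(t_2)) - \bE_{\CBP}(R_G(t_1))\, \bE_{\CBP}(R_G(t_2)),$$
and to argue that, because $t_1=a_1L$ and $t_2=a_2L$ are both multiples of $L$, each piece is governed by exactly the same graph-dependent coefficients $c_0,c_1,c_2,c_3$ that already appear in Theorem~\ref{thm:Vcbp}, with only the placement probabilities changing. For the product of means, the remark following Theorem~\ref{thm:Ecbp} gives $\bE_{\CBP}(R_G(aL)) = c_0\, p_1(a)$ (using $n=Lm$), so the subtracted term is immediately $c_0^2\, p_1(a_1) p_1(a_2)$, matching the last summand of the statement. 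The real content is therefore the cross moment.

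First I would write
$$\bE_{\CBP}(R_G(t_1) R_G(t_2)) = \sum_{(i,j),(u,v)\in G} \bP_{\CBP}\!\left(e \text{ crosses } t_1,\; e' \text{ crosses } t_2\right),$$
where $e=(i,j)$, $e'=(u,v)$, and an edge ``crosses'' $t$ means its two endpoints land on opposite sides of the boundary after permutation. The key simplification is that, since $t_1$ and $t_2$ fall exactly between blocks, whether an edge crosses a boundary depends only on which \emph{blocks} its endpoints occupy, not on their positions inside the blocks. Consequently the joint probability depends on the pair $(e,e')$ only through how the (at most four) endpoints distribute among distinct blocks under a given blocking $\omega\in\Omega$. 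This is precisely the information used to define $c_1,c_2,c_3$ in Theorem~\ref{thm:Vcbp}: after averaging over the $L$ blockings, edge pairs contribute according to whether their endpoints occupy two, three, or four distinct blocks with both edges block-crossing, and degenerate pairs (coinciding edges, or edges sharing a node) are absorbed into the two- and three-block classes exactly as in Remark~\ref{remark:coef}. Since this classification is a function of the graph and the blocking alone and never references $t_1$ or $t_2$, the coefficients $c_1,c_2,c_3$ are identical to those in Theorem~\ref{thm:Vcbp}.

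It then remains to compute, for each class, the conditional probability that the designated edge crosses $t_1$ and the other crosses $t_2$; these are the $q_1,q_2,q_3$. I would partition the $m$ block-positions into three regions: the first $a_1$ (region I), the middle $a_2-a_1$ (region II), and the last $m-a_2$ (region III). An edge crosses $t_1$ iff one endpoint-block lies in I and the other in II$\cup$III; it crosses $t_2$ iff one endpoint-block lies in I$\cup$II and the other in III. In the two-block class both edges join the same block pair, so they jointly cross iff one of the two blocks is in region I and the other in region III, giving $q_1(a_1,a_2)=\tfrac{2a_1(m-a_2)}{m(m-1)}$ by a direct hypergeometric count. For the three- and four-block classes I would enumerate the placements of the three (respectively four) distinct blocks among the regions subject to the two straddling constraints; the resulting sums of hypergeometric terms collapse to $q_2$ and $q_3$, with the bracket $(a_1-1)(m-a_1-1)+(a_2-a_1)(m-a_1-2)$ in $q_3$ reflecting the case split according to whether the remaining endpoints fall in region I, II, or III.

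The main obstacle is the four-block computation for $q_3$: the two straddling requirements couple the two edges through the shared middle region II, so the enumeration must track which region each of the four blocks occupies while respecting the distinctness of positions. I would organize this by fixing the two ``anchor'' blocks forced into regions I and III (which contribute the prefactor $4a_1(m-a_2)$ from the $2\times 2$ choices of which endpoint of each edge is the anchor) and then summing over the admissible placements of the remaining two blocks. As a consistency check, setting $a_1=a_2=a$ empties region II and should reduce $q_1,q_2,q_3$ to $p_1,p_2,p_3$, recovering the variance of Theorem~\ref{thm:Vcbp} as the diagonal case; verifying this reduction is a quick guard against enumeration errors.
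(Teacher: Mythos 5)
Your proposal is correct and follows essentially the same route as the paper's proof: expand the covariance, observe that at block-multiple times the crossing events depend only on block placements so the coefficients $c_1,c_2,c_3$ from Theorem \ref{thm:Vcbp} carry over unchanged, and recompute the per-class placement probabilities $q_1,q_2,q_3$ (the paper states this as ``follow exactly the same procedure as in the proof of Theorem \ref{thm:Vcbp} with $p_1,p_2,p_3$ replaced by $q_1,q_2,q_3$''). Your explicit region-counting for the $q$'s and the diagonal consistency check $a_1=a_2$ are sound additions but do not change the substance of the argument.
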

The proof of this theorem is in Appendix \ref{sec:Cov_cbp_Proof}.


Then, for $t=aL$, $a\in\{1,\dots,m-1\}$, we have
\begin{align*}
C(t) & = \tfrac{m(m-1)( h_1(a,m) c_1 + h_2(a,m) c_2 + h_3(a,m) c_3)}{2La(m-a)( h_4(a,m) (2c_1 + c_2) + h_5(a,m) c_3 + h_6(a,m) c_0^2)}
\end{align*}
where
\begin{align*}
h_1(a,m) & = 2m(m-2)(m-3), \\
h_2(a,m) & = (m-3)((m-2a)^2-2m), \\
h_3(a,m) & = -4(m-2a)^2+4m, \\
h_4(a,m) & = m(m-1)(m-2)(m-3)  \\
h_5(a,m) & = 4m(m-1)(a-1)(m-a-1), \\
h_6(a,m) & = -4a(m-a)(m-2)(m-3).
\end{align*}

Based on the above results, the $p$-value approximation is reasonably well for low to moderate dimension, but not that well when the dimension is high (details see in Section \ref{sec:pval_check}, Table \ref{tab:cv}).  The reason is that when the dimension is high, the convergence of $Z_{\CBP}(t)$ to the Gaussian distribution is low when $t$ closes to the two ends and, for finite samples, $Z_{\CBP}(t)$ could be quite skewed. Figure \ref{fig:skewness} plots the skewness of $Z_{\CBP}(t)$ estimated from 100,000 random circular block permutations for two data sequences generated based on model M2 under scenarios 2 and 3 provided in Section \ref{sec:pval_check}.  We can see that the skewness is quite severe when $t$ is close to the two ends of the sequence and when the dimension is high.  \cite{chen2015graph} discussed the same issue under the permutation framework, here, we adopt a similar treatment for the circular block permutation by adding an extra term $S(t)$ to correct for the skewness.  This term varies across $t$ to address for the different extend of the skewness across $t$.
\begin{align}\label{eq:pval2}
P(\max_{n_0\leq t\leq n_1} Z_{\CBP}(t)>b) & \approx b\phi(b) \sum_{n_0\leq t\leq n_1} S(t) C(t) \nu\left(\sqrt{2b^2 C(t) } \right),
\end{align}
where
$S(t)  = \tfrac{\exp\left(\tfrac{1}{2}(b-\hat{\theta}_b(t))^2+\tfrac{1}{6}\gamma(t) \hat{\theta}^3_b(t) \right)}{\sqrt{1+\gamma(t)\hat{\theta}_b(t)}},$
with $\gamma(t) = \ep_{\CBP}(Z^3_{\CBP}(t))$ and $\hat{\theta}_b(t) = (-1+\sqrt{1+2b\gamma(t)})/\gamma(t)$.

To get an exact analytic expression for $\ep_{\CBP}(Z^3_{\CBP}(t))$, one needs to figure out all possible configurations of 3 edges, and whether any of the six end nodes are within a block or not.  Thus, even for only calculation $t=aL, a=1,2,\dots,m$, the analytic expression is very complicated and needs quite some time to run in R.  It turns out that $\gamma(t)$ can be reasonably well approximated by $\ep_\per(Z^3_\per(t))$, which can be instantly computed in R with its exact analytic expression provided in \cite{chen2015graph}.  Figure \ref{fig:skewness} plots $\ep_\per(Z^3_\per(t))$ (red line) on top of estimated $\ep_{\CBP}(Z^3_{\CBP}(t))$ (black dots), and we can see that $\ep_\per(Z^3_\per(t))$ provides a good estimate to $\ep_{\CBP}(Z^3_{\CBP}(t))$.  Hence, when we apply \eqref{eq:pval2} to approximate the $p$-value, we use $\ep_\per(Z^3_\per(t))$ to estimate $\gamma(t)$ in computing $S(t)$.

\begin{figure}[!htp]
\caption{Skewness $\ep_{\CBP}(Z^3_{\CBP}(t))$ estimated from 100,000 random circular block permutations (black dots), and $\ep_\per(Z^3_\per(t))$ computed from its exact analytic expression (red line). \label{fig:skewness}}	
\hspace{8mm} $d=100$ \hspace{48mm} $d=1,000$  \hspace{35mm} \\
\vspace{-0.2em}
	\includegraphics[width=.48\textwidth]{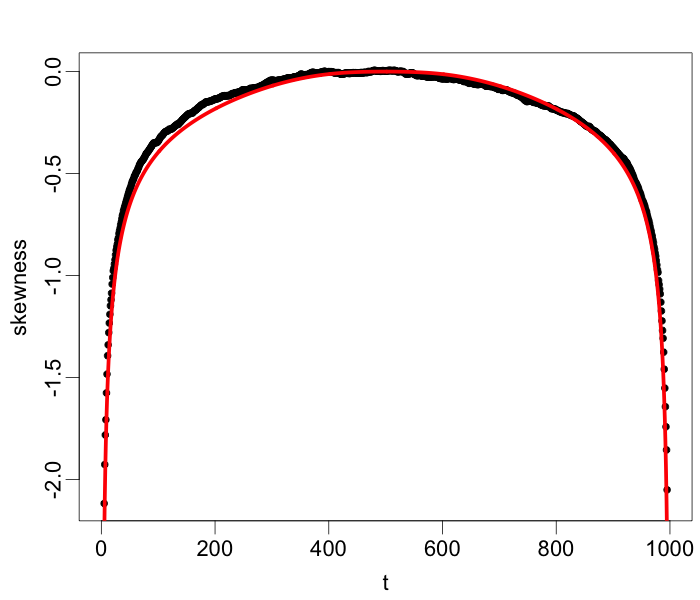} 
	\includegraphics[width=.48\textwidth]{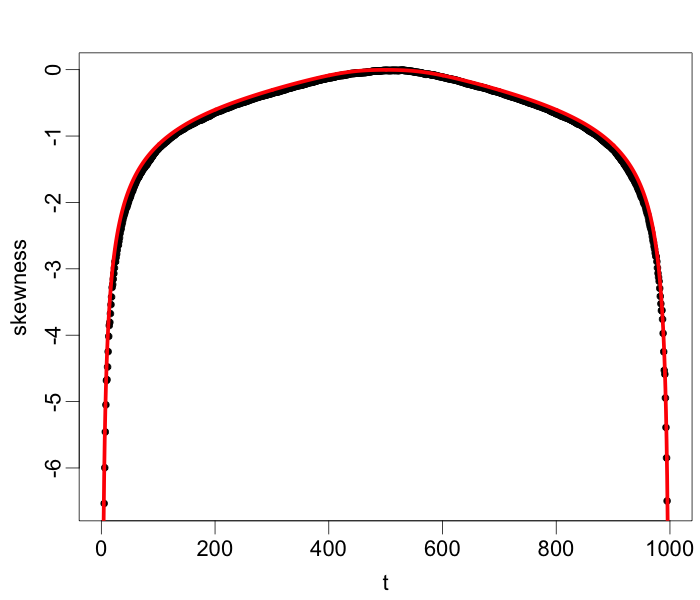} 
\end{figure}

\subsection{Check analytic $p$-value approximations} \label{sec:pval_check}
Here, we check how accurate the analytic formulas provided in \ref{sec:pval_formula} in approximating the $p$-values.  We compare the analytic $p$-value approximations obtained through asymptotic results \eqref{eq:pval1} (denoted by ``A1") and after skewness correction \eqref{eq:pval2} (denoted by ``A2") with the $p$-value estimated from 100,000 random circular block permutations (denoted by ``CBP").  We generate data from autoregressive and/or moving average models and consider the following three scenarios:
\begin{enumerate}[{Scenario} 1:]
	\item $d=10$, noises generated from the Gaussian distribution.
	\item $d=100$, noises generated from $t_5$ distribution. 
	\item $d=1,000$, noises generated from the Laplace distribution.
\end{enumerate}

We use function \texttt{arima.sim} to first generate $d$ independent sequences of time series data of length $n=1,000$: $\bz_1, \bz_2, \dots, \bz_n \in \mathbb{R}^d$. Then, let $\by_t = \Sigma^{1/2} \bz_t$, where the $(i,j)$ element of $\Sigma$ is $0.6^{|i-j|}$.  The methods are applied to the data sequence $\{\by_1,\by_2,\dots, \by_n\}$. We consider five autoregressive and/or moving average models in generating $\bz_t$'s.
\begin{itemize}
	\item M1: AR(1) with parameter 0.1.
	\item M2: AR(2) with parameters 0.1 and 0.05.
	\item M3: MA(1) with parameter 0.1.
	\item M4: MA(2) with parameters 0.1 and 0.05.
	\item M5: ARMA(1,1) with parameters 0.1 and 0.1.
\end{itemize}

\begin{table}[h]
\centering
\caption{Critical values at 0.05 significance level based on the asymptotic results (``A1"), the skewness corrected version  (``A2"), and 100,000 circular block permutations (``CBP").}\label{tab:cv}
\begin{tabular}{|c|l||c|c|c||c|c|c||c|c|c|}
\hline
& & \multicolumn{3}{|c||}{Scenario 1 ($d=10$)} & \multicolumn{3}{|c||}{Scenario 2 ($d=100$)} & \multicolumn{3}{|c|}{Scenario 3 ($d=1000$)}  \\ \cline{3-11}
& & A1 & A2 & CBP & A1 & A2 & CBP & A1 & A2 & CBP \\ \hline  \hline
 \multirow{4}{*}{M1} & $L=2$ & 3.05 & 2.94 & 2.92 & 2.99 & 2.72 & 2.71 & 2.95 & 2.38 & 2.44 \\ \cline{2-11}
    & $L=5$ & 3.05 & 2.94 & 2.92 & 2.99 & 2.72 & 2.71 & 2.95 & 2.38 & 2.40 \\ \cline{2-11} 
    & $L=10$ & 3.05 & 2.94 & 2.92 & 2.99 & 2.72 & 2.71 & 2.95 & 2.37 & 2.38 \\ \cline{2-11} 
    & $L=20$ & 3.04 & 2.94 & 2.94 & 2.99 & 2.72 & 2.71 & 2.95 & 2.37 & 2.36 \\ \hline \hline
 \multirow{4}{*}{M2} & $L=2$ & 3.05 & 2.94 & 2.92 & 3.00 & 2.77 & 2.73 & 2.96 & 2.47 & 2.52 \\ \cline{2-11}
    & $L=5$ & 3.05 & 2.94 & 2.91 & 3.00 & 2.77 & 2.74 & 2.96 & 2.47 & 2.51 \\ \cline{2-11} 
    & $L=10$ & 3.05 & 2.94 & 2.91 & 3.00 & 2.77 & 2.72 & 2.96 & 2.47 & 2.51 \\ \cline{2-11} 
    & $L=20$ & 3.05 & 2.95 & 2.88 & 3.00 & 2.77 & 2.69 & 2.96 & 2.46 & 2.50 \\ \hline \hline
 \multirow{4}{*}{M3} & $L=2$ & 3.05 & 2.95 & 2.94 & 3.00 & 2.73 & 2.74 & 2.96 & 2.40 & 2.48 \\ \cline{2-11}
    & $L=5$ & 3.05 & 2.95 & 2.95 & 3.01 & 2.74 & 2.76 & 2.96 & 2.40 & 2.48 \\ \cline{2-11} 
    & $L=10$ & 3.06 & 2.96 & 2.96 & 3.01 & 2.74 & 2.75 & 2.96 & 2.40 & 2.48 \\ \cline{2-11} 
    & $L=20$ & 3.06 & 2.96 & 2.98 & 3.01 & 2.74 & 2.77 & 2.96 & 2.40 & 2.47 \\ \hline \hline
 \multirow{4}{*}{M4} & $L=2$ & 3.05 & 2.94 & 2.91 & 3.00 & 2.71 & 2.72 & 2.97 & 2.52 & 2.55 \\ \cline{2-11}
    & $L=5$ & 3.05 & 2.94 & 2.90 & 2.99 & 2.71 & 2.69 & 2.97 & 2.52 & 2.55 \\ \cline{2-11} 
    & $L=10$ & 3.05 & 2.94 & 2.91 & 3.00 & 2.71 & 2.72 & 2.97 & 2.53 & 2.57 \\ \cline{2-11} 
    & $L=20$ & 3.05 & 2.94 & 2.89 & 3.00 & 2.71 & 2.72 & 2.98 & 2.53 & 2.59 \\ \hline \hline
 \multirow{4}{*}{M5} & $L=2$ & 3.05 & 2.96 & 2.93 & 3.01 & 2.78 & 2.76 & 2.98 & 2.59 & 2.60 \\ \cline{2-11}
    & $L=5$ & 3.05 & 2.96 & 2.93 & 3.01 & 2.78 & 2.77 & 2.98 & 2.59 & 2.58 \\ \cline{2-11} 
    & $L=10$ & 3.05 & 2.96 & 2.93 & 3.01 & 2.78 & 2.78 & 2.97 & 2.58 & 2.56 \\ \cline{2-11} 
    & $L=20$ & 3.06 & 2.96 & 2.95 & 3.01 & 2.78 & 2.80 & 2.97 & 2.58 & 2.56 \\ \hline \hline
\end{tabular}
	
\end{table}

We checked for several block sizes: $L=2, 5, 10, 20$.  The critical values based on ``A1", ``A2", and ``CBP" are provided in Table \ref{tab:cv}.  Here, results under ``CBP" can be viewed as good estimates of the true critical values even though they are time-consuming to obtain.  We compare results under ``A1" and ``A2" with those under ``CBP".  We see that, under scenario 1, when the dimension is low, both analytic $p$-value approximations work reasonably well with those under ``A2" very close the corresponding ones under ``CBP".  Under scenarios 2 and 3, the dimension is higher, we see that the analytic $p$-value approximation only based on the asymptotic results is doing poorly, while the analytic $p$-value approximation after skewness correction are still close to those obtained through 100,000 circular block permutations.  The same pattern goes for different models and different choices of $L$'s.  Based on these simulation studies, we see that the analytic $p$-value approximation after skewness correction \eqref{eq:pval2} provides reasonably accurate estimates to the $p$-value under the circular block permutation framework.

\section{Discussion}
\label{sec:discussion}

In this section, we discuss the choice of $L$ through a data driven way, and the extension of the proposed framework to accommodate multiple change-points.


\subsection{Choice of $L$}\label{sec:Lchoice}
One practical question in applying this circular block permutation framework is the choice of $L$.  Here, we provide a data driven way to choose $L$.  Figure \ref{fig:L_Z} plots the $Z_{G,\CBP}(t)$ under different choices of $L$'s.

\begin{figure}[!htp]
\centering
\caption{Plots of the scan statistic $Z_{G,\CBP}(t)$ under different choices of $L$'s.  Data generated from the multivariate autoregression model $\by_t = \rho\by_{t-1} + \bvareps_t,~t=1,\dots,n$, with  $\by_0\sim \mathcal{N}(\mathbf{0},\frac{1}{1-\rho^2}\Sigma), \bvareps_1,\dots,\bvareps_n \overset{iid}{\sim}\mathcal{N}(\mathbf{0}, \Sigma), d=100,~n=1,000$.  The $(i,j)$th element of $\Sigma$ is $0.6^{|i-j|}$.  Top panel: $\rho=0$; middle panel: $\rho=0.1$; bottom panel: $\rho=0.2$. \label{fig:L_Z}}
\includegraphics[width=.9\textwidth]{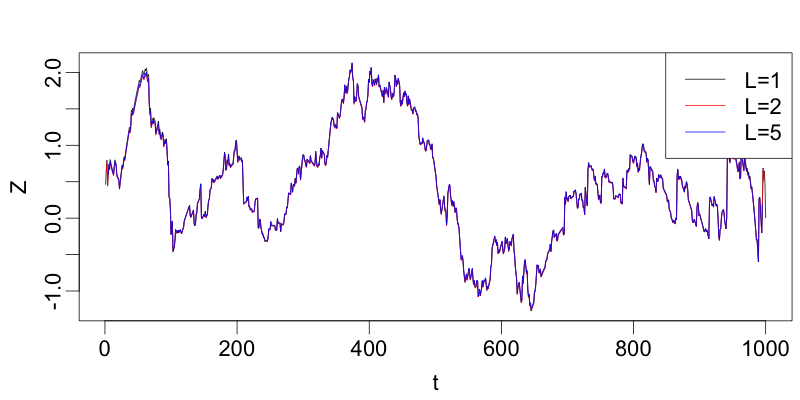}  \\
\includegraphics[width=.9\textwidth]{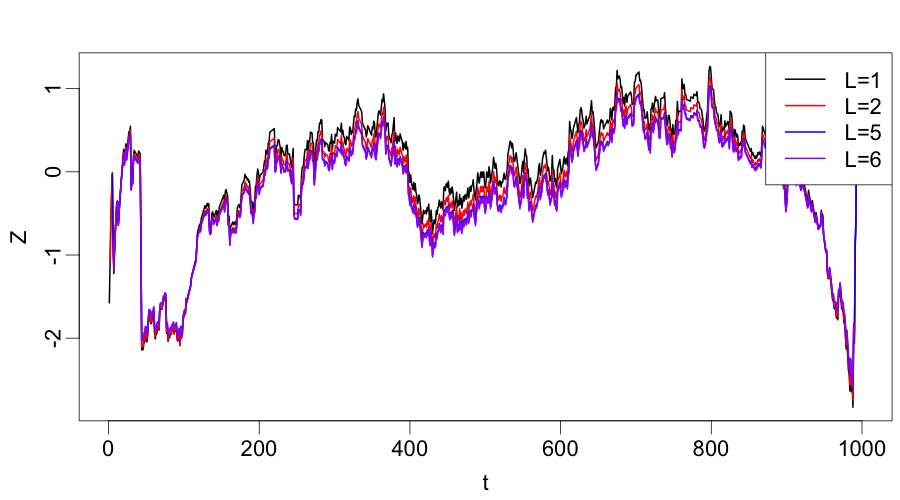} \\
\includegraphics[width=.9\textwidth]{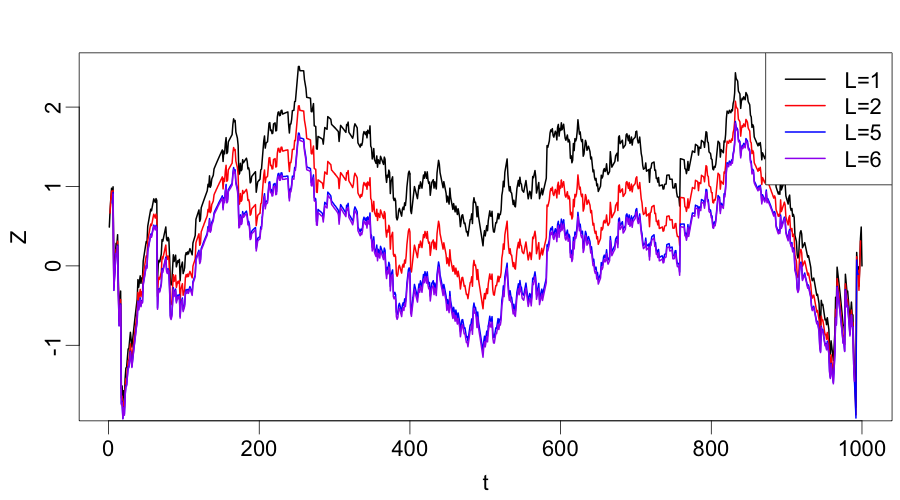}
\end{figure} 

In the top panel, the observations are independent: $\by_1, \dots, \by_n \overset{iid}{\sim} \mathcal{N}(\mathbf{0}, \Sigma)$, and the $(i,j)$th element of $\Sigma$ is $0.6^{|i-j|}$. The dimension of the data is 100 and the length of the sequence is 1,000.  
We see that the scan statistic $Z_{G,\CBP}(t)$ under $L=1$ (permutation), $L=2$, $L=5$ are almost the same.  On the other hand, in the middle and bottom panels, the data sequence is from a multivariate autoregression model: $\by_t = \rho\by_{t-1} + \bvareps_t,~t=1,\dots,n$, with  $\by_0\sim \mathcal{N}(\mathbf{0},\frac{1}{1-\rho^2}\Sigma), \bvareps_1,\dots,\bvareps_n \overset{iid}{\sim}\mathcal{N}(\mathbf{0}, \Sigma)$.  Here, $\rho=0.1$ for the middle panel, and $\rho=0.2$ for the bottom panel.  First of all, when the observations are not independent, the scan statistic under $L=2$ no longer overlaps with that under permutation ($L=1$).  The stronger the autocorrelation is, the larger the discrepancy between the two curves under $L=1$ and $L=2$.  Secondly, as $L$ increases, the curves becomes more similar.  For example, in the middle panel, the two curves under $L=5$ and $L=6$ almost overlap with each other; in the bottom panel, the curve under $L=6$ is also close to that under $L=5$, showing that $L$ around this range is close to enough to take care of the local dependence in the sequence.  Hence, we could choose $L$ to be the value that the scan statistic no longer changes or changes in a negligible amount.  

\begin{figure}[!htp]
\centering
\caption{Plots of the maximum scan statistic $\max_{n_0\leq t\leq n_1} Z_{G,\CBP}(t)$ under different choices of $L$'s based on the same data sets in Figure \ref{fig:L_Z}.  Top panel: $\rho=0$; middle panel: $\rho=0.1$; bottom panel: $\rho=0.2$. In the right panel, the ratio of $\max_{n_0\leq t\leq n_1} Z_{G,\CBP}(t)$ under $L+1$ over that under $L$ is plotted.  The horizontal line is at 0.99. \label{fig:L_Zm}}
\includegraphics[width=.47\textwidth]{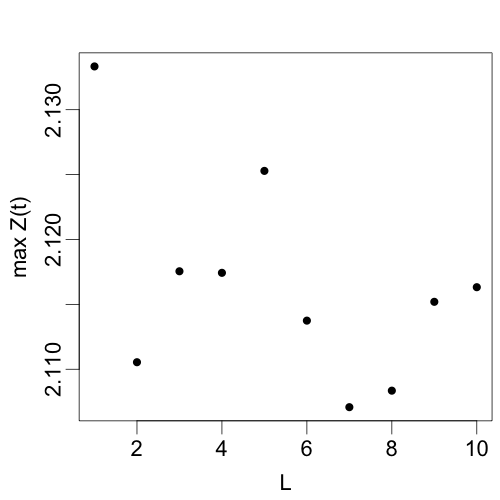}	
\includegraphics[width=.47\textwidth]{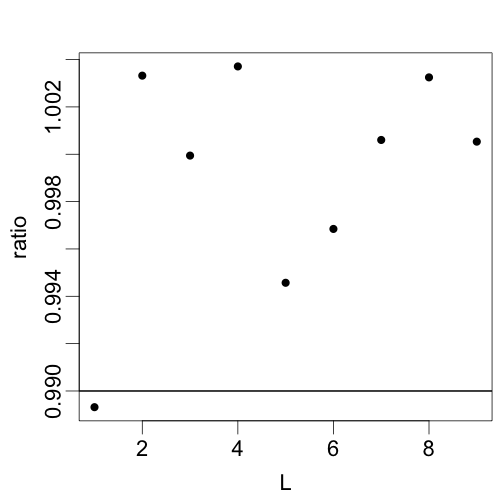}	
\includegraphics[width=.47\textwidth]{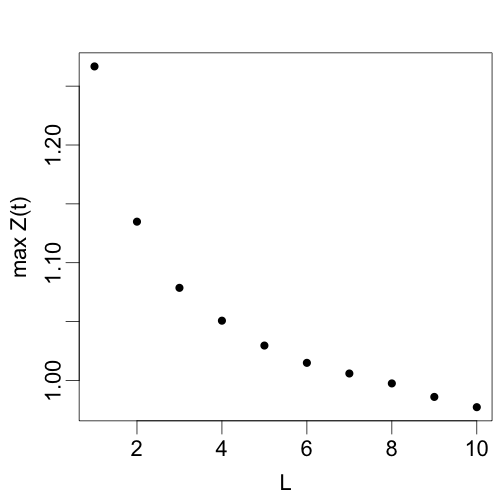}	
\includegraphics[width=.47\textwidth]{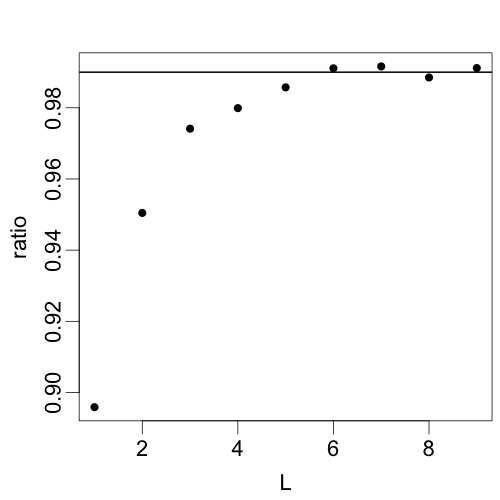}	
\includegraphics[width=.47\textwidth]{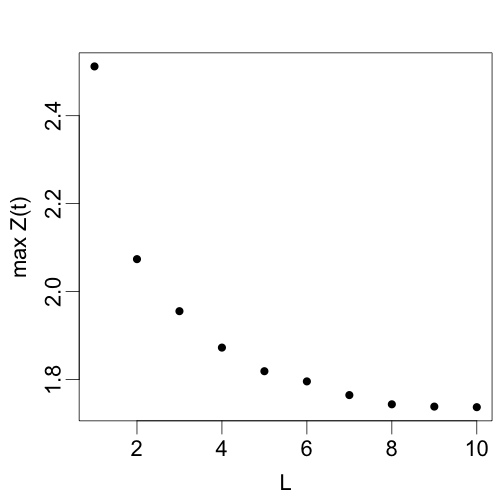}	
\includegraphics[width=.47\textwidth]{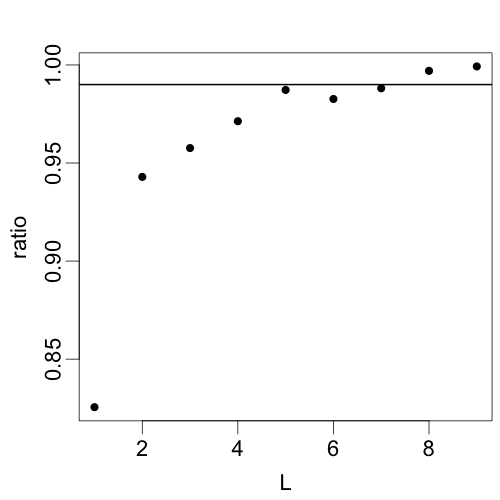}	
\end{figure} 

To be more specific, we plot the maximum scan statistic $\max_{n_0\leq t\leq n_1} Z_{G,\CBP}(t)$ over $L$ for the three data sequences and they are shown in Figure \ref{fig:L_Zm} (left panels).  We see that the maximum scan statistic scatter around under the independent case (top panel), and the maximum scan statistic is decreasing for $L$ from 1 to 10 under the autocorrelated cases (middle panel and bottom panel).  In the right panel, the ratio of the maximum scan statistic under $L+1$ over that under $L$ is plotted.  In each ratio plot, a horizontal line at 0.99 is added, which appears to be a reasonable threshold to use in choosing $L$.  We could set $L$ to be the smallest value such that the ratio goes above 0.99.  Under this criterion, we could set $L=2$, $L=6$, and $L=8$ for the three data sequences, respectively. 

\subsection{Extension to multiple change-points} \label{sec:multipleCP}

We focused on the single change-point alternative so far.  The proposed framework could be extended to the changed interval alternative up to some modifications: For any pair of times $t_1<t_2$, we could construct the test statistic $Z_{G,\CBP}(t_1,t_2)$ to test $\{\by_{t_1},\dots,\by_{t_2-1}\}$ against $\{\by_{t_2},\dots,\by_n,\by_1,\dots,\by_{t_1-1}\}$ in a similar way as $Z_{G,\CBP}(t)$, which tests $\{\by_1,\dots,\by_t\}$ against $\{\by_{t+1},\dots,\by_n\}$.  The scan statistic for the changed-interval alternative can then be defined as  
$$\max_{1\leq t_1<t_2\leq n,\ l_0\leq t_2-t_1\leq l_1} Z_{G,\CBP}(t_1,t_2), \quad (l_0, l_1 \text{ prespecified}).$$
Other treatments could follow accordingly.  

For more complicated scenarios with possibly multiple change-points, we could apply the scan statistic  for the single change-point alternative recursively in a binary segmentation procedure \citep{vostrikova1981detecting}, or apply the scan statistic for the changed-interval alternative recursively in a circular binary segmentation procedure \citep{olshen2004circular}. 

%

\section*{Acknowledgments}
Hao Chen is supported in part by NSF award DMS-1513653.

\bibliographystyle{imsart-nameyear}
\bibliography{cbp}

\appendix

\section{Proofs for theorems}

\subsection{Proof of Theorem \ref{thm:Vcbp}}
\label{sec:Vcbp_Proof}
We have
\begin{align*}
& \bE_{\CBP}(R_G^2(t))\\
& = \sum_{(i,j), (u,v)\in G} P(g_{\pi_{\CBP}(i)}(t)\neq g_{\pi_{\CBP}(j)}(t),\ g_{\pi_{\CBP}(u)}(t) \neq \pi_{\pi_{\CBP}(v)}(t)) \\
& = \sum_{(i,j)\in G} P(g_{\pi_{\CBP}(i)}(t)\neq g_{\pi_{\CBP}(j)}(t)) \\
& \ + \sum_{(i,j), (i,u)\in G;\ j\neq u} P(g_{\pi_{\CBP}(i)}(t)\neq g_{\pi_{\CBP}(j)}(t),\ g_{\pi_{\CBP}(i)}(t) \neq \pi_{\pi_{\CBP}(u)}(t)) \\
& \ + \sum_{\footnotesize \tiny \begin{array}{c}(i,j), (u,v)\in G\\ i,j,u,v\text{ all different}\end{array}} P(g_{\pi_{\CBP}(i)}(t)\neq g_{\pi_{\CBP}(j)}(t),\ g_{\pi_{\CBP}(u)}(t) \neq \pi_{\pi_{\CBP}(v)}(t)).
\end{align*}
The first part of the summation is $\bE_{\CBP}(R_G(t))$.  In the following, we figure out the second and third part of the summation.

\subsubsection{$P(g_{\pi_{\CBP}(i)}(t)\neq g_{\pi_{\CBP}(j)}(t),\ g_{\pi_{\CBP}(i)}(t)\neq \pi_{\pi_{\CBP}(u)}(t)), \ j\neq u$}

When $\delta_{ij}, \delta_{iu}, \delta_{ju} \geq L$, $i,j,u$ are all in different blocks, and the probability is $p_2(a)$.  When at least one of $\delta_{ij}, \delta_{iu}, \delta_{ju} \geq L$ is less than $L$, we need to consider scenarios that some of the indices could be in the same block.  In the following, we consider when the event $\{g_{\pi_{\CBP}(i)}(t)\neq g_{\pi_{\CBP}(j)}(t),\ g_{\pi_{\CBP}(i)}(t)\neq \pi_{\pi_{\CBP}(u)}(t)\}$ could happen.

When $\delta_{ij} <L$, $\delta_{iu}, \delta_{ju} \geq L$, since $\pi_{\CBP}(i)$ and $\pi_{\CBP}(j)$ need to be in different blocks, $i$ and $j$ need to be in different blocks. And the probability is
\begin{align*}
\tfrac{\delta_{ij}}{L} p_2(a).
\end{align*}
Similarly, when $\delta_{iu} <L$, $\delta_{ij}, \delta_{ju} \geq L$, the probability is
\begin{align*}
\tfrac{\delta_{iu}}{L} p_2(a).
\end{align*}

When $\delta_{ju} <L$, $\delta_{ij}, \delta_{iu} \geq L$, since $\pi_{\CBP}(j)$ and $\pi_{\CBP}(u)$ can be in the same block, so the probability is
\begin{align*}
\tfrac{l-\delta_{ju}}{L} p_1(a) +\tfrac{\delta_{ju}}{L} p_2(a).
\end{align*}

When $\delta_{ij}, \delta_{iu} <L$, $\delta_{ju} \geq L$, then $\delta_{ju} = \delta_{ij}+\delta_{iu}$.   $j$ and $u$ will always be in different blocks.  We need $i$ be in a block
different from $j$'s and $u$'s block so that the probability is positive.  So the probability is
\begin{align*}
\tfrac{\delta_{ju}-L}{L} p_2(a).
\end{align*}

When $\delta_{ij}, \delta_{ju} <L$, $\delta_{iu} \geq L$, then $\delta_{iu} = \delta_{ij}+\delta_{ju}$.  $i$ and $j$ needs to be in different blocks, so the probability is
\begin{align*}
\tfrac{\delta_{ij}-(\delta_{iu}-L)}{L} p_1(a)+\tfrac{\delta_{iu}-L}{L} p_2(a) = \tfrac{L-\delta_{ju}}{L} p_1(a) + \tfrac{\delta_{iu}-L}{L} p_2(a).
\end{align*}
Similarly for $\delta_{iu}, \delta_{ju} <L$.

When $\delta_{ij}, \delta_{iu}, \delta_{ju} <L$, $i$ needs to be in different block from $j$ and $u$. So the probability is
\begin{align*}
 \tfrac{\min(\delta_{ij},\delta_{iu})}{L} I(\max(\delta_{ij},\delta_{iu},\delta_{ju})\neq \delta_{ju}) p_1(a).
\end{align*}

\subsubsection{$P(g_{\pi_{\CBP}(i)}(t)\neq g_{\pi_{\CBP}(j)}(t),\ g_{\pi_{\CBP}(u)}(t) \neq \pi_{\pi_{\CBP}(v)}(t))$, $i,j,u,v$ all different}

When $i,j,u,v$ are all different, there are $\binom{4}{2}=6$ index-pairs among them.
If all pairwise index distances are greater than or equal to $L$, this probability is $p_3(a)$.  In the following, we consider scenarios that at least one of the six index distances is less than $L$.

\begin{enumerate}[1)]
\item One distance $<L$.

If $\delta_{ij}<L$ and all other pairwise distances are $\geq L$, then this probability is
$$\tfrac{\delta_{ij}}{L} p_3(a).$$
Similarly, if only $\delta_{uv}<L$, the probability is 
$$\tfrac{\delta_{uv}}{L} p_3(a).$$

If only $\delta_{iu}<L$, the probability is
$$\tfrac{L-\delta_{iu}}{L} p_2(a) + \tfrac{\delta_{iu}}{L} p_3(a).$$
Similar for only $\delta_{iv}<L$, or $\delta_{ju}<L$, or $\delta_{jv}<L$.

\item Two distances $<L$.

If only $\delta_{ij},\delta_{uv}<L$, 
  then the probability is
$$\tfrac{(\min(\delta_{ij}, b_{ij,uv}+\delta_{uv})-b_{ij,uv})_+ + (\min(\delta_{ij},b_{ij,uv}+\delta_{uv}-l))_+}{L} p_3(a).$$

If only $\delta_{ij}, \delta_{iu}<L$, we must have $\delta_{ui}+\delta{ij} = \delta_{uj}$, the probability is 
$$\tfrac{L-\delta_{ui}}{L} p_2(a) + \tfrac{\delta_{ij}+\delta_{iu}-L}{L} p_3(a).$$
Similar for other 7 similar cases.

If only $\delta_{iu}, \delta_{iv} <L$, then we have $\delta_{uv} = \delta_{ui} + \delta_{iv}$, the probability is
$$\tfrac{2L-\delta_{uv}}{L} p_2(a) + \tfrac{\delta_{uv}-L}{L} p_3(a).$$
Similar for other 3 similar cases.

If only $\delta_{iu},\delta_{jv}<L$, the probability is
$$\tfrac{l-\delta_{iu}-\delta_{jv}+x(iu,jv)}{L} p_1(a) + \tfrac{\delta_{iu}+\delta_{jv}-2x(iu,jv)}{L} p_2(a) + \tfrac{x(iu,jv)}{L} p_3(a).$$
Similar for the case that only $\delta_{iv},\delta_{ju}<L$.

\item Three distances $<L$.

If only $\delta_{ij}, \delta_{uv}, \delta_{iu} <L$, then the order of the four indices must be $(j,i,u,v)$ or the reverse, and $\delta_{jv} = \delta_{ji}+\delta_{iu}+\delta_{uv}$.  The probability is
$$\tfrac{L-\delta_{iu}}{L} p_2(a) + \tfrac{(\delta_{jv}-2L)_+}{L} p_3(a).$$
Similar for other 3 similar cases: replace $\delta_{iu}$ by one of $(\delta_{iv}, \delta_{ju}, \delta_{jv})$.

If only $\delta_{ij},\delta_{iu},\delta_{jv}<L$, then the order of the four indices must be $(u,i,j,v)$ or the reverse, and $\delta_{uv} = \delta_{ui}+\delta_{ij}+\delta_{jv}$.  The probability is
$$\tfrac{2L-\delta_{uv}+(\delta_{uv}-2L)_+}{L} p_1(a) + \tfrac{\delta_{uv}+\delta_{ij}-2L-2(\delta_{uv}-2L)_+}{L} p_2(a) + \tfrac{(\delta_{uv}-2L)_+}{L} p_3(a).$$
Similar for other 3 similar cases: only $\delta_{ij}, \delta_{iv}, \delta_{ju}<L$; only $\delta_{uv}, \delta_{ui}, \delta_{vj}<L$; only $\delta_{uv}, \delta_{uj}, \delta_{vi}<L$.

If only $\delta_{iu}, \delta_{iv}, \delta_{ju}<L$, then the order of the four indices must be $(j,u,i,v)$ or the reverse, and $\delta_{jv} = \delta_{ju}+\delta_{ui}+\delta_{iv}$.  The probability is
$$\tfrac{2L-\delta_{jv}+(\delta_{jv}-2L)_+}{L} p_1(a) + \tfrac{\delta_{jv}-L-2(\delta_{jv}-2L)_+}{L}p_2(a) + \tfrac{(\delta_{jv}-2L)_+}{L} p_3(a).$$
Similar for other 3 similar cases: choose 3 out of $(\delta_{iu},\delta_{iv},\delta_{ju},\delta_{jv})$.

If only $\delta_{ij}, \delta_{iu}, \delta_{ju}<L$, the probability is
$$\tfrac{\delta_{ij}}{L} p_2(a).$$
Similar for other 3 similar cases: only $\delta_{ij}, \delta_{iv}, \delta_{jv}<L$; only $\delta_{uv}, \delta_{iu}, \delta_{iv}<L$; only $\delta_{uv}, \delta_{ju}, \delta_{jv}<L$.

It's not possible for 4 cases with only three distances smaller than $l$ that share one index.  For example, only $\delta_{ij}, \delta_{iu}, \delta_{iv}<L$.

\item Four distances $<L$.

For scenarios that 4 distances $<L$ and 2 distances $\geq L$, it is impossible for 3 cases: only $\delta_{ij}, \delta_{uv}\geq L$; only $\delta_{iu},\delta_{jv}\geq L$; only $\delta_{iv},\delta_{ju}\geq L$.

If only $\delta_{ij}, \delta_{iu}\geq L$, it can be $(i,v,u,j)$ or $(i,v,j,u)$ or their reverses.  For both orders, the probability is
$$\tfrac{L-\delta_{iv}}{L} p_1(a) + \tfrac{\delta_{iu}-L}{L} p_2(a).$$
Similar for 7 other similar cases.

If only $\delta_{iu},\delta_{iv}\geq L$, it can be $(i,j,u,v)$ or $(i,j,v,u)$ or their reverses.  The probability is
$$\tfrac{\delta_{uv}}{L} p_2(a).$$
Similar for 3 other similar cases.

\item Five distances $<L$.

If only $\delta_{ij}\geq L$, the probability is
$$\tfrac{\delta_{uv}}{L} p_1(a).$$
Similar for if only $\delta_{uv}\geq L$.

If only $\delta_{iu}\geq L$, it can be $(i,j,v,u)$ or $(i,v,j,u)$ or their reverses.  The probability is
$$\tfrac{\delta_{iu}-L}{L} p_2(a) + \tfrac{\delta_{vj}}{L}I(\delta_{ij}=\delta_{iv}+\delta_{vj}) p_1(a).$$
Similar for other 3 similar cases.

\item Six distances $<L$.

If all 6 distances are $<L$.  

If $\delta_{ij}$ is the maximum distance, the probability is 
$$\tfrac{\delta_{uv}}{L} p_1(a).$$
Similar for the case that $\delta_{uv}$ is the maximum distance.

If $\delta_{iu}$ is the maximum distance, the probability is
$$\tfrac{\delta_{jv}}{L}I(\delta_{ij}=\delta_{iv}+\delta_{jv}) p_1(a).$$
Similar for 3 other similar cases.

\end{enumerate}

Summing all possible scenarios, we get the result stated in Lemma \ref{thm:Vcbp}.

\subsection{Proof of Theorem \ref{thm:asym}}\label{sec:asym_Proof}
 
To prove $\{Z_{G,\CBP}([mw]L): \epsilon \leq w \leq 1-\epsilon \}$ converges to a Gaussian process in finite dimensional distributions, we only need to show that $$(Z_{G,\CBP}([mw_1]L), Z_{G,\CBP}([mw_2]L),\dots,Z_{G,\CBP}([mw_K]L))$$ converges to multivariate Gaussian as $n\rightarrow\infty$ for any $\epsilon \leq w_1<w_2<\dots<w_K\leq 1-\epsilon $ for any fixed $K$.  For simplicity, let $t_k = [mw_k]L, k=1,\dots, K$.

To prove the above results, we take one step back.  For circular block permutation with a random starting point, in the last step of the recipe, the action is to permute the $m=n/L$ blocks.  Let $\pi(i)$ be the observed time of block $i$ after this block permutation, then $(\pi(1),\dots,\pi(m))$ is a permutation of $1,\dots,m$.  We can do the last step in the following two-step approach: (1) For each $i$, $\tilde{\pi}(i)$ is sampled uniformly from 1 to $m$; (2) only those that each value in $\{1,\dots,m\}$ is sampled once are retained.  It is easy to see that each block permutation has the same occurrence probability after these two steps.

We call the distribution resulting from only performing the first step the circular block bootstrap with a random starting point, short as {\CBB}, and use $\bP_{\CBB}, \bE_{\CBB}, \bV_{\CBB}$ to denote the probability, expectation, and variance, respectively.

Let 
\begin{align*}
Z_{G,\CBB}(t) & = -\tfrac{R_G(t) -\bE_{\CBB}(R_G(t))}{\sqrt{\bV_{\CBB}(R_G(t))}}, \\
X_{\CBB}(t) & = \tfrac{n_{\CBB}(t)-(t/L)}{\sqrt{(1-t/n)t/L}}, \text{ where } n_{\CBB}(t) = \sum_{i=1}^m I_{\tilde{\pi}(i)\leq t/L}.
\end{align*}

Then following the similar arguments for obtaining Theorems \ref{thm:Ecbp} and \ref{thm:Vcbp}, we have that, for $t=aL, a\in\{1,\dots,m\}$,
\begin{align*}
\bE_{\CBB}(R_G(t)) & = c_0 p_{1,\CBB}(a), \\
\bV_{\CBB}(R_G(t)) & = c_1 p_{1,\CBB}(a) + c_2 p_{2,\CBB}(a) + c_3 p_{3,\CBB}(a) - c_0^2 p_{1,\CBB}^2(a) \\
& :=(\sigma_{\CBB}(t))^2,
\end{align*}
where
$p_{1,\CBB}(a)  = \tfrac{2a(m-a)}{m^2}$,
$p_{2,\CBB}(a)  = \tfrac{a(m-a)}{m^2}$, 
$p_{3,\CBB}(a)  = \tfrac{4a^2(m-a)^2}{m^4}.$

We next prove the following two lemmas.
\begin{lemma}\label{thm:asym_cbb}
Under Condition 1, as $n\rightarrow\infty$, 
\begin{equation}\label{eq:ZX}
 (Z_{G,\CBB}(t_1), \dots, Z_{G,\CBB}(t_K), X_{\CBB}(t_1), \dots,X_{\CBB}(t_K)) 
\end{equation}
converges to a multivariate Gaussian distribution under CBB and the covariance matrix of 
$(X_{\CBB}(t_1), X_{\CBB}(t_2),\dots,X_{\CBB}(t_K))$
is positive definite.
\end{lemma}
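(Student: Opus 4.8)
The plan is to prove joint asymptotic normality via the Cramér–Wold device: it suffices to show that every fixed linear combination
$$W_n = \sum_{k=1}^K \alpha_k Z_{G,\CBB}(t_k) + \sum_{k=1}^K \beta_k X_{\CBB}(t_k)$$
converges to a univariate Gaussian. The decisive feature of \CBB{}, in contrast to \CBP{}, is that the block labels $\tilde\pi(1),\dots,\tilde\pi(m)$ are i.i.d.\ uniform on $\{1,\dots,m\}$. I would first condition on the blocking $\omega\in\Omega$ (one of the $L$ fixed ways of cutting the circle into blocks). Conditionally on $\omega$, each quantity becomes an explicit function of these independent labels: $n_{\CBB}(t_k)=\sum_{i=1}^m I_{\tilde\pi(i)\le a_k}$ is a sum of single-block terms that does not depend on $\omega$ at all, while $R_G(t_k)=\sum_{e\in G} I_{e \text{ crosses the } t_k\text{-boundary}}$ is a sum of edge terms, each depending only on the labels of the (at most two) blocks containing the endpoints of $e$. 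Hence $W_n$ is, conditionally, a centered sum of \emph{locally dependent} summands, placing us in the same regime as the permutation CLT of \cite{chen2015graph}.

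Step two is to make the dependency structure explicit and apply a Stein-type central limit theorem for sums of locally dependent random variables. Two edge summands are dependent only when they share a block, and an edge's block-membership can be altered by a neighbor through the sets $A_{e,L,0}\subset A_{e,L,1}\subset A_{e,L,2}$ introduced above; likewise the node-based sets $A_{i,L,\cdot}$ govern the dependence of the $X$-summands on the edge summands. The first- and second-order neighborhoods $A_{e,L,1},A_{e,L,2}$ (and $A_{i,L,1},A_{i,L,2}$) are exactly the quantities appearing in the Stein error bound, and Condition 1 — namely $|G|=O(n^\alpha)$ with $1\le\alpha<\tfrac{8}{7}$, together with $\sum_{e\in G}|A_{e,L,1}||A_{e,L,2}|=o(n|G|^{1/2})$ and $\sum_{i}|A_{i,L,1}||A_{i,L,2}|=o(n^{1.5})$ — is precisely what forces this error to vanish relative to the $3/2$ power of the variance. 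The limiting covariance of $W_n$ is then assembled from the \CBB{} moment formulas $\bE_{\CBB}(R_G(t))=c_0 p_{1,\CBB}(a)$ and $\bV_{\CBB}(R_G(t))=c_1 p_{1,\CBB}(a)+c_2 p_{2,\CBB}(a)+c_3 p_{3,\CBB}(a)-c_0^2 p_{1,\CBB}^2(a)$ together with their cross-covariance counterparts (the \CBB{} analogues of Theorem \ref{thm:Cov_cbp}); the Cramér–Wold limits identify a Gaussian limit law, with possible degeneracy in the $Z$-directions whose nondegeneracy is recovered only later via Conditions 2 or 3 in Theorem \ref{thm:asym}. To pass from the conditional statement back to the unconditional \CBB{} law, I would note that $\Omega$ is finite and fixed and argue that the standardized limit does not depend on $\omega$: it is enough to check that the between-blocking shift $(\bE(R_G(t_k)\mid\omega)-\bE_{\CBB}(R_G(t_k)))/\sqrt{\bV_{\CBB}(R_G(t_k))}$ is negligible and that the conditional variances are asymptotically equivalent across $\omega$, so the $L$ conditional Gaussian limits coincide and their mixture is the same Gaussian.

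The positive-definiteness of the limiting covariance of $(X_{\CBB}(t_1),\dots,X_{\CBB}(t_K))$ I would obtain by direct computation rather than through the CLT machinery, since $X_{\CBB}$ is free of the graph and of the blocking. Because the $\tilde\pi(i)$ are i.i.d., for $a_k\le a_l$ one has $\bCov(n_{\CBB}(t_k),n_{\CBB}(t_l))=\tfrac{a_k(m-a_l)}{m}$ and $\bV(n_{\CBB}(t_k))=\tfrac{a_k(m-a_k)}{m}$, so after standardization the limiting correlation is $\sqrt{w_k(1-w_l)/(w_l(1-w_k))}$ for $w_k\le w_l$. This is the covariance kernel of a time-changed Brownian bridge evaluated at the distinct points $w_1<\dots<w_K$, which is nonsingular; positive-definiteness follows from the strict monotonicity of these entries in the distinct $w_k$.

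I expect the main obstacle to be Step two: setting up the dependency graph cleanly and verifying that the Stein normal-approximation error is controlled by the neighborhood sums in Condition 1, in particular tracking how the bounded block size $L$ and the averaging over the $L$ blockings interact with the local-dependence bookkeeping. The $X$-part and the Cramér–Wold reduction are routine; the nontrivial content is the locally dependent CLT for the graph-based term $R_G(t)$ under the i.i.d.\ block-label resampling.
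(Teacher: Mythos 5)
Your overall architecture is the same as the paper's: a Cram\'er--Wold reduction, Stein's method for sums of locally dependent variables with the neighborhoods $A_{e,L,1},A_{e,L,2}$ and $A_{i,L,1},A_{i,L,2}$ driving the error bound under Condition 1, and a separate elementary computation for the $X$-part (the paper invokes the determinant formula $|\Sigma_X| = \prod_{k}(1-t_k/t_{k+1})/\prod_k(1-t_k/n)$ from \cite{chen2015graph} rather than your Brownian-bridge correlation kernel; both are routine, though note that ``strict monotonicity of the entries'' is not by itself a valid proof of positive definiteness --- you would want either the determinant computation or strict positive definiteness of the kernel $\sqrt{w_k(1-w_l)/(w_l(1-w_k))}$ at distinct points). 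The one structural difference is that you condition on the blocking $\omega\in\Omega$ before applying Stein's method and then try to undo the conditioning at the end, whereas the paper never conditions: it applies the Stein bound once, to the unconditional \CBB{} law, with the summands centered and scaled by the unconditional moments $c_0\,p_{1,\CBB}(a_k)$ and $\sigma_{\CBB}(t_k)$. Your instinct to condition is understandable --- conditionally on $\omega$ the block labels are i.i.d.\ and the local-independence claims become literally immediate --- but it creates an extra step that the paper's route does not have.

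That extra step is where the gap is. You assert it ``is enough to check'' that the shift $(\bE(R_G(t_k)\mid\omega)-\bE_{\CBB}(R_G(t_k)))/\sqrt{\bV_{\CBB}(R_G(t_k))}$ is negligible and that the conditional variances agree across $\omega$, but you never establish this, and it does not follow from Condition 1. Conditionally on $\omega$ one has $\bE(R_G(t_k)\mid\omega)=c_0(\omega)\,p_{1,\CBB}(a_k)$, so the shift is proportional to $c_0(\omega)-c_0$, the fluctuation across the $L$ blockings of the number of block-crossing edges. Condition 1 constrains only neighborhood sizes, not this dispersion; indeed Remark \ref{remark:coef} identifies $c_1+c_2+c_3-c_0^2=\tfrac{1}{L}\sum_{\omega\in\Omega}c_0^2(\omega)-c_0^2\geq 0$ as exactly this between-blocking dispersion, and nothing in Condition 1 forces it to be of smaller order than $\bV_{\CBB}(R_G(t_k))$. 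Concretely, take $G$ to consist of the $m$ edges $(jL,\,jL+1)$, $j=1,\dots,m$: every edge is split by exactly one of the $L$ blockings and by no other, so $c_0(\omega)$ equals $|G|$ for one $\omega$ and $0$ for all others, while Condition 1 holds ($\alpha=1$, all $A$-sets of size $O(1)$). In that case your $L$ conditional limits have means separated by an amount of the same order as $\sqrt{\bV_{\CBB}(R_G(t_k))}$, and a finite mixture of Gaussians with $O(1)$-separated standardized means is not Gaussian --- so the de-conditioning cannot be waved through as routine. Closing your version of the argument requires an additional hypothesis or lemma controlling $\max_{\omega}|c_0(\omega)-c_0|$ relative to $\sigma_{\CBB}$, which is precisely the quantity distinguishing the conditional law (fixed blocks, i.i.d.\ labels) from the full \CBB{} law with its random starting point. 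If you want to avoid that, you must restructure to match the paper and work unconditionally from the start, with the neighborhoods defined over all possible blockings, rather than conditioning and mixing.
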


\begin{lemma}\label{thm:cbbp}
Under Condition 2 or Condition 3, we have for $k=1,\dots,K$,
\begin{enumerate}
\item $\tfrac{\bV_{\CBB}(R_G(t_k))}{\bV_{\CBP}(R_G(t_k))}\rightarrow r([mw_k])$, with $r(a)$ a constant only depends on $a$.
\item $\tfrac{\bE_{\CBB}(R_G(t_k))-E_{\CBP}(R_G(t_k))}{\sqrt{\bV_{\CBB}(R_G(t_k))}}\rightarrow 0.$
\end{enumerate}
\end{lemma}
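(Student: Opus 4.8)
The plan is to treat the two assertions separately, exploiting the fact that both $\bV_{\CBB}(R_G(t_k))$ and $\bV_{\CBP}(R_G(t_k))$ are explicit functions of the \emph{same} four graph coefficients $c_0,c_1,c_2,c_3$ (from Theorem \ref{thm:Vcbp}) and of $a_k=[mw_k]$, differing only through the position polynomials $p_i$ versus $p_{i,\CBB}$. Writing $a/m\to w_k\in(\epsilon,1-\epsilon)$, one checks directly that $p_{i,\CBB}(a)/p_i(a)\to 1$ termwise for $i=1,2,3$ as $m\to\infty$; so the substance of Part 1 is not that each variance is individually equivalent to its CBB counterpart, but that both are dominated by the \emph{same} combinatorial quantity, whence the graph-structure dependence cancels in the ratio and only a position-dependent prefactor survives.

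For Part 1, I would first pin down the relative magnitudes of $c_0^2,c_1,c_2,c_3$. Using the identity $c_1+c_2+c_3=\frac1L\sum_{\omega\in\Omega}c_0^2(\omega)$ from Remark \ref{remark:coef}, together with $\frac1L\sum_\omega c_0^2(\omega)=c_0^2+\bE_\Omega[(c_0(\omega)-c_0)^2]$, I would show that $c_1,c_2$ and $\bE_\Omega[(c_0(\omega)-c_0)^2]$ are all of order $\bE_\Omega(\sum_i D_i^2)$, while $c_3$ agrees with $c_0^2$ up to a remainder of the same order. Rewriting $\bV_{\CBP}=c_1 p_1(a)+c_2 p_2(a)+(c_3-c_0^2)p_3(a)+c_0^2(p_3(a)-p_1^2(a))$ (and analogously under CBB), the last term is $O(c_0^2/m)=O(|G|^2/n)$ since $p_3(a)-p_1^2(a)=O(1/m)$, while the remaining terms are $\Theta(\bE_\Omega(\sum_i D_i^2))$ times explicit $w$-dependent factors. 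Conditions 2 and 3 enter exactly here: under Condition 2 the $D_i$'s are genuinely variable, so $\bE_\Omega(\sum_i D_i^2)$ is not swamped by its $\frac1m(\bE_\Omega\sum_i D_i)^2=\frac{4c_0^2}{m}$ part and remains the dominant contribution; under Condition 3 (the admissible flat case with $|G|=O(n)$) the bound $\bE_\Omega(\sum_i D_i^2)=O(|G|)$ keeps the same term dominant. In either regime the dominant term is identical for numerator and denominator, so the limit $r([mw_k])$ is the ratio of the surviving position prefactors and depends only on $w_k$.

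For Part 2 the computation is short. From $\bE_{\CBB}(R_G(t))=c_0 p_{1,\CBB}(a)$ and $\bE_{\CBP}(R_G(t))=c_0 p_1(a)$ one gets
\[
\bE_{\CBB}(R_G(t_k))-\bE_{\CBP}(R_G(t_k))=c_0\big(p_{1,\CBB}(a_k)-p_1(a_k)\big)=-c_0\,\tfrac{2a_k(m-a_k)}{m^2(m-1)}=O(c_0/m)=O(|G|/n).
\]
It then remains to divide by $\sqrt{\bV_{\CBB}(R_G(t_k))}$. By the order analysis of Part 1 (where the non-degeneracy is what Conditions 2 and 3 secure), $\bV_{\CBB}(R_G(t_k))$ is bounded below by a quantity of order $|G|$; hence
\[
\frac{|\bE_{\CBB}(R_G(t_k))-\bE_{\CBP}(R_G(t_k))|}{\sqrt{\bV_{\CBB}(R_G(t_k))}}=O\!\left(\tfrac{|G|/n}{\sqrt{|G|}}\right)=O\!\left(\tfrac{\sqrt{|G|}}{n}\right)=O(n^{\alpha/2-1})\to 0,
\]
using $|G|=O(n^\alpha)$ with $\alpha<8/7<2$ from Condition 1. (If one instead invokes the Cauchy--Schwarz/Jensen bound $\bE_\Omega(\sum_i D_i^2)\ge\frac1m(\bE_\Omega\sum_i D_i)^2=\frac{4c_0^2}{m}=\Theta(|G|^2/n)$, the same conclusion follows at rate $O(n^{-1/2})$.)

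The main obstacle is Part 1, and specifically ruling out degenerate cancellation in the variance: I must verify that $\bV_{\CBB}$ and $\bV_{\CBP}$ genuinely attain the order of their putative dominant term rather than collapsing to something smaller through negative shared-block covariances. This is precisely what Conditions 2 and 3 are designed to prevent, and the delicate point is to carry the argument through as two cases---the non-flat regime controlled by Condition 2 and the flat regime $|G|=O(n)$ controlled by Condition 3---while confirming that the surviving dominant term is the \emph{same} (up to the explicit position factor) for CBB and CBP, so that the ratio converges to a structure-free limit $r([mw_k])$ as claimed.
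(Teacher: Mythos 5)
Your Part 2 is sound and is essentially the paper's own argument: $\bE_{\CBB}(R_G(t_k))-\bE_{\CBP}(R_G(t_k)) = -c_0\tfrac{2a_k(m-a_k)}{m^2(m-1)} = O(c_0/m)$, while $\bV_{\CBB}(R_G(t_k))$ is bounded below by a constant multiple of $c_1\geq c_0$, hence of $|G|$, so the ratio is $O(|G|^{1/2}/m)=O(m^{\alpha/2-1})\rightarrow 0$ since $\alpha<8/7<2$.

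Part 1, however, has a genuine gap. Your pivot is the hierarchy claim that $c_1$, $c_2$ and $c_1+c_2+c_3-c_0^2=\bE_\Omega[(c_0(\omega)-c_0)^2]$ are all of order $\bE_\Omega(\sum_i D_i^2)=2c_1+c_2$, so that both variances equal $\Theta\bigl(\bE_\Omega(\sum_iD_i^2)\bigr)$ times position factors and the graph dependence cancels in the ratio. The claim about $c_1+c_2+c_3-c_0^2$ is false under the lemma's hypotheses. Take $L=2$ and $G$ the perfect matching $\{(2i-1,2i): i\leq n/2\}$: Conditions 1 and 2 hold and $2c_1+c_2=\Theta(n)$, yet one of the two blockings cuts no edge while the other cuts all $n/2$ of them, so $\bE_\Omega[(c_0(\omega)-c_0)^2]=\Theta(n^2)\gg 2c_1+c_2$. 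In this example both $\bV_{\CBB}$ and $\bV_{\CBP}$ are dominated by the term $(c_1+c_2+c_3-c_0^2)\cdot(\text{position factor})=\Theta(n^2)$, not by anything of order $\bE_\Omega(\sum_iD_i^2)$; your ``dominant combinatorial quantity'' is the wrong one, even though the lemma's conclusion still holds here.

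The reason it still holds — and the route the paper takes — is that this potentially enormous term never needs to be bounded at all. Writing $\bV_{\CBB}=c_1p_{3,\CBB}+(2c_1+c_2)(p_{2,\CBB}-p_{3,\CBB})+(c_1+c_2+c_3-c_0^2)p_{3,\CBB}$ (which uses $p_{3,\CBB}=p_{1,\CBB}^2$) and the analogous CBP decomposition, whose middle coefficient is $2c_1+c_2-\tfrac{4c_0^2}{m}+O(c_0^2/m^2)$, the third term carries asymptotically identical position factors in numerator and denominator and cancels from the ratio whatever its size. What then remains is exactly the two points your sketch does not carry out: (i) the discrepancy $\tfrac{4c_0^2}{m}$ between the two middle coefficients, which is where Condition 2 (via $2c_1+c_2=\bE_\Omega(\sum_iD_i^2)$, $\tfrac1m(\bE_\Omega\sum_iD_i)^2=\tfrac{4c_0^2}{m}$, and the Cauchy--Schwarz bound $2c_1+c_2\geq\tfrac{4c_0^2}{m}$) or Condition 3 enters; and (ii) a degeneracy in the position factors: the middle factor behaves like $w(1-w)(1-2w)^2$ and collapses to $O(1/m)$ at $w=1/2$, so near $w=1/2$ one cannot compare middle terms at all but must show they are negligible against the $c_1$ term, which the paper does through the Condition-1 consequence $2c_1+c_2\leq L^{-2}\sum_j|A_{j,L,1}||A_{j,L,2}|=o(n^{1.5})$ together with $c_1\geq c_0=\Theta(n^\alpha)$, $\alpha\geq1$. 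With the false hierarchy in place of this cancellation-plus-case-analysis argument, your proof of Part 1 does not go through as written.
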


From Lemma \ref{thm:asym_cbb},  $(Z_{G,\CBB}(t_1), Z_{G,\CBB}(t_2),\dots, Z_{G,\CBB}(t_K))$ conditioning on $(X_{\CBB}(t_1), X_{\CBB}(t_2),\dots,X_{\CBB}(t_K))$ converges to multivariate normal under \CBB.  Since  $(Z_{G,\CBB}(t_1), Z_{G,\CBB}(t_2),\dots, Z_{G,\CBB}(t_K))| X_{\CBB}(t_1)=0,$ $X_{\CBB}(t_2)=0, \dots, X_{\CBB}(t_K)=0)$ under CBB has the same distribution as $(Z_{G,\CBB}(t_1), Z_{G,\CBB}(t_2),\dots, Z_{G,\CBB}(t_K))$ under \CBP, and notice that
$$Z_{G,\CBP}(t) = \tfrac{\bV_{\CBB}(R_G(t))}{\bV_{\CBP}(R_G(t))}\left(Z_{G,\CBB}(t) -   \tfrac{\bE_{\CBB}(R_G(t))-E_{\CBP}(R_G(t))}{\sqrt{\bV_{\CBB}(R_G(t))}}\right).$$
Given Lemma \ref{thm:cbbp}, we conclude that $(Z_{G,\CBP}(t_1), Z_{G,\CBP}(t_2),\dots,Z_{G,\CBP}(t_K))$ converges to a multivariate Gaussian distribution under \CBP.  Next, we prove the two lemmas.

\subsubsection{Proof for Lemma \ref{thm:asym_cbb}}  To show that \eqref{eq:ZX} converges to a multivariate Gaussian distribution, we only need to show that $\sum_{k=1}^K(c_{1k} Z_{G,\CBB}(t_k) + c_{2k} X_{\CBB}(t_k))$ converges to a normal distribution for any fixed $\{c_{1k}\}$ and $\{c_{2k}\}$.
If $\bV_{G,\CBB}(\sum_{k=1}^K(c_{1k} Z_{G,\CBB}(t_k) + c_{2k} X_{\CBB}(t_k))=0$, $\sum_{k=1}^K(c_{1k} Z_{G,\CBB}(t_k) + c_{2k} X_{\CBB}(t_k)$ is degenerating.  For non-degenerating case, let $$\sigma_0^2 = \bV_{G,\CBB}(\sum_{k=1}^K(c_{1k} Z_{G,\CBB}(t_k) + c_{2k} X_{\CBB}(t_k)).$$  

We prove the Gaussianity of $\sum_{k=1}^K(c_{1k} Z_{G,\CBB}(t_k) + c_{2k} X_{\CBB}(t_k))$ by the Stein's method.
Consider sums of the form $W=\sum_{i\in{\cal J}} \xi_i,$
where $\mathcal{J}$ is an index set and $\xi$ are random variables with $\bE \xi_i=0$, and $\bE (W^2)=1$.  The following assumption restricts the dependence between $\{\xi_i:~i \in \mathcal{J}\}$.
\begin{assumption} \cite[p.\, 17]{chen2005stein}
  \label{assump:LD} 
For each $i\in{\cal J}$ there exists $K_i \subset L_i \subset {\cal J}$ such that $\xi_i$ is independent of $\xi_{K_i^c}$ and $\xi_{K_i}$ is independent of $\xi_{L_i^c}$.
\end{assumption}
We will use the following existing theorem in proving Theorem \ref{thm:asym}.
\begin{theorem}\label{thm:3.4} \cite[Theorem 3.4]{chen2005stein}
Under Assumption \ref{assump:LD}, we have
$$\sup_{h\in Lip(1)} |E(h(W)) - E(h(Z))| \leq \delta$$
where $Lip(1) = \{h: \mathbb{R}\rightarrow \mathbb{R}; \|h^\prime\|\leq 1 \}$, $Z$ has ${\cal N}(0,1)$ distribution and
 $$\delta = 2 \sum_{i\in{\cal J}} (E|\xi_i \eta_i\theta_i| + |E(\xi_i\eta_i)|E|\theta_i|) + \sum_{i\in{\cal J}} E|\xi_i\eta_i^2|$$
with $\eta_i = \sum_{j\in K_i}\xi_j$ and $\theta_i = \sum_{j\in L_i} \xi_j$, where $K_i$ and $L_i$ are defined in Assumption \ref{assump:LD}.
\end{theorem}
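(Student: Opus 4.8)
The plan is to prove the bound by Stein's method, extracting the two error terms in $\delta$ from the two nested neighbourhoods $K_i\subset L_i$ in Assumption~\ref{assump:LD} via two successive conditioning steps. First I would recall the Stein equation: for $h\in Lip(1)$ let $f=f_h$ solve $f'(w)-wf(w)=h(w)-E(h(Z))$, given explicitly by $f_h(w)=e^{w^2/2}\int_{-\infty}^w (h(u)-E(h(Z)))e^{-u^2/2}\,du$. From this representation the standard regularity estimates for a Lipschitz test function give $\|f_h''\|_\infty\le 2\|h'\|_\infty\le 2$, which is the only bound on $f$ I will use. Substituting $w=W$ reduces the whole problem to controlling $|E(h(W))-E(h(Z))|=|E(f'(W)-Wf(W))|$.

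For $E(Wf(W))$ I would write $W=\sum_{i\in\mathcal{J}}\xi_i$ and expand $E(Wf(W))=\sum_i E(\xi_i f(W))$. Since $E\xi_i=0$ and $\xi_i$ is independent of $\xi_{K_i^c}$, hence of $W-\eta_i=\sum_{j\notin K_i}\xi_j$, each summand equals $E(\xi_i(f(W)-f(W-\eta_i)))$ because $E(\xi_i f(W-\eta_i))=0$. The argument is then driven by a variance identity: as $E(\xi_i\xi_j)=0$ for $j\notin K_i$, the normalization $E(W^2)=1$ collapses to $\sum_i E(\xi_i\eta_i)=1$. This lets me split $E(f'(W))=E(f'(W)\sum_i\xi_i\eta_i)-E(f'(W)(\sum_i\xi_i\eta_i-1))$ and organize everything as $E(f'(W)-Wf(W))=T_1-T_2$, where $T_1=\sum_i E(\xi_i\int_0^{\eta_i}(f'(W)-f'(W-t))\,dt)$ and $T_2=\sum_i E(f'(W)(\xi_i\eta_i-E(\xi_i\eta_i)))$.

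Bounding $T_1$ is routine: $|f'(W)-f'(W-t)|\le\|f''\|\,|t|$ gives $|T_1|\le \tfrac12\|f''\|\sum_i E|\xi_i\eta_i^2|\le\sum_i E|\xi_i\eta_i^2|$, which is the last group in $\delta$. The second neighbourhood enters through $T_2$: assuming $i\in K_i$, the product $\xi_i\eta_i$ is a function of $\xi_{K_i}$ alone, and $\xi_{K_i}$ is independent of $\xi_{L_i^c}$, hence of $W-\theta_i$, so that $E(f'(W-\theta_i)(\xi_i\eta_i-E(\xi_i\eta_i)))=0$. I may therefore replace $f'(W)$ by $f'(W)-f'(W-\theta_i)$ inside each summand of $T_2$; applying $|f'(W)-f'(W-\theta_i)|\le\|f''\|\,|\theta_i|$ together with $E|\theta_i(\xi_i\eta_i-E(\xi_i\eta_i))|\le E|\xi_i\eta_i\theta_i|+|E(\xi_i\eta_i)|E|\theta_i|$ yields $|T_2|\le 2\sum_i(E|\xi_i\eta_i\theta_i|+|E(\xi_i\eta_i)|E|\theta_i|)$. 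Adding the two bounds reproduces $\delta$ exactly.

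The step I expect to be the main obstacle is the treatment of $T_2$, i.e.\ the second conditioning. Centering by $E(\xi_i\eta_i)$ and replacing $W$ by $W-\theta_i$ works only because the nested structure $K_i\subset L_i$ guarantees that $\xi_i\eta_i$, a function of $\xi_{K_i}$, is independent of the complementary block $\xi_{L_i^c}$; pinning down this measurability precisely — in particular that $i\in K_i$, so $\xi_i\eta_i$ never reaches outside $K_i$ — is the delicate point, as is checking that the first reduction used only $K_i$-independence so that the two error terms decouple cleanly. The remaining ingredient to verify carefully is the regularity estimate $\|f_h''\|_\infty\le 2$ for merely Lipschitz $h$, which must be derived from the integral representation of $f_h$ rather than assumed, since both $T_1$ and $T_2$ depend on it.
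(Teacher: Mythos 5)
The paper does not actually prove this statement---it is quoted verbatim from \cite{chen2005stein} as an existing result to be used in the proof of Theorem \ref{thm:asym}---so there is no in-paper argument to compare against, but your reconstruction is correct and is essentially the standard proof of that cited theorem. Your three ingredients---the Stein-equation regularity bound $\|f_h''\|\le 2\|h'\|$, the identity $\sum_{i\in\mathcal{J}} E(\xi_i\eta_i)=E(W^2)=1$ (from $\xi_i$ being independent of $W-\eta_i$) yielding the decomposition $E(f'(W)-Wf(W))=T_1-T_2$, and the two conditioning steps using $\xi_i$ independent of $\xi_{K_i^c}$ and $\xi_{K_i}$ independent of $\xi_{L_i^c}$ (with $i\in K_i$, a point you rightly flag, since otherwise $\xi_i$ would be independent of itself and hence degenerate)---reproduce $\delta$ exactly as in the source.
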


We adopt the same notations with the index set $\mathcal{J}=\{e\in G\}\cup\{1,\dots,n\}$.  Denote the end nodes of an edge $e$ to be $e_-$ and $e_+$. Let 
$$\xi_{e,k} = \tfrac{I_{g_{\tilde{\pi}(e_-)}(t_k)\neq g_{\tilde{\pi}(e_+)}(t_k)} - p_{1,\CBB}(t_k)}{\sigma_{\CBB}(t_k)}.$$
Since $I_{g_{\tilde{\pi}(e_-)}(t_k)\neq g_{\tilde{\pi}(e_+)}(t_k)}\in \{0,1\}$ and $p_{1,\CBB}(t_k)\in[0,1]$, we have that $$|\xi_{e,k}|\leq \tfrac{1}{\sigma_{\CBB}(t_k)}.$$

Let $$\xi_{i,k} = \tfrac{I_{\tilde{\pi}(i)\leq t_k/L}-w_k}{\sqrt{m w_k(1-w_k)}}.$$
Similarly, we have $$|\xi_{i,k}|\leq \tfrac{1}{\sqrt{m w_k(1-w_k)}}.$$

Let $\xi_e = \sum_{k=1}^K c_{1k} \xi_{e,k}/\sigma_0$, $\xi_i = \sum_{k=1}^K c_{2k} \xi_{i,k}/\sigma_0$, and $W=\sum_{j\in \mathcal{J}} \xi_j = \sum_{k=1}^K (c_{1k} Z_{G,\CBB}(t_k) + c_{2k} X_{\CBB}(t_k))/\sigma_0$.  Then $\bE_{\CBB}(W)=0, \bE_{\CBB}(W^2) = 1$.  

Let $c_M = \max(\sum_{k=1}^K |c_{1k}|, \sum_{k=1}^K |c_{2k}|)$, $\sigma_1 = \sigma_0\times \min_k \sigma_{\CBB}(t_k)$, $\sigma_2 = \sigma_0 \times \min_k \sqrt{m w_k(1-w_k)}$, then
$$|\xi_e|\leq \tfrac{c_M}{\sigma_1},\ \forall e\in G; \quad |\xi_i|\leq \tfrac{c_M}{\sigma_2},\ \forall  i\in\{1,\dots,n\}.$$
Based on Remark \ref{remark:coef}, $c_1+c_2+c_3\geq c_0^2$, we have, with $a_k=[mw_k]$, 
\begin{align*}
\bV_\CBB(R_G(t_k))& \geq  c_1(p_{1,\CBB}(a_k)-p_{1,\CBB}^2(a_k)) + c_2(p_{2,\CBB}(a_k)-p_{1,\CBB}^2(a_k)) \\
& = c_1\tfrac{a_k(m-a_k)}{m(m-1)} + (c_1+c_2)\tfrac{a_k(m-a_k)}{m(m-1)} \left(1-\tfrac{4a_k(m-a_k)}{m(m-1)}\right) \\
& \geq c_1\tfrac{a_k(m-a_k)}{m(m-1)} - (c_1+c_2)\tfrac{a_k(m-a_k)}{m(m-1)^2}.
\end{align*}
It can be shown that $c_2=o(mc_1)$ (detailed arguments seen in the proof for Lemma \ref{thm:cbbp}).  So $\bV_\CBB(R_G(t_k))$ is at least of order $c_1$, which is at least of order $|G|$.  It is clear that $\sigma_2 = O(n)$.

For any $A$ a set of edge(s), let $V(A)$ to be the set that contains all nodes being connected by at least one edge in $A$. Notice that $|V(A)|\leq 2|A|$ as the worst scenario occurs when all edge(s) in $A$ are disconnected.

With $A_{e,L,1}$ and $A_{e,L,2}$ defined in Section \ref{sec:asymp}, for $e\in G$, let 
\begin{align*}
S_e & = A_{e,L,1} \cup V(A_{e,L,1}),\\
T_e & = A_{e,L,2} \cup V(A_{e,L,2}), 
\end{align*}
Then $S_e$ and $T_e$ satisfy Assumption \ref{assump:LD}.

With $A_{i,L,1}$ and $A_{i,L,2}$ defined in Section \ref{sec:asymp}, for  $i = 1,\dots,n$, let
\begin{align*}
S_i & = A_{i,L,1} \cup V(A_{i,L,1}), \\
T_i & = A_{i,L,2} \cup V(A_{i,L,2}), 
\end{align*}
Then $S_i$ and $T_i$ satisfy Assumption \ref{assump:LD}.


By Theorem \ref{thm:3.4}, we have $\sum_{h\in Lip(1)} |\bE h(W) - \bE h(Z)| \leq \delta$ for $Z\sim \mathcal{N}(0,1)$, where
\allowdisplaybreaks
\begin{align*}
\delta & =  2 \sum_{i\in{\cal J}} (E|\xi_i \eta_i\theta_i| + |E(\xi_i\eta_i)|E|\theta_i|) + \sum_{i\in{\cal J}} E|\xi_i\eta_i^2| \\
& =  2 \sum_{e\in G} (E|\xi_e \eta_e\theta_e| + |E(\xi_e\eta_e)|E|\theta_e|) + \sum_{e\in G} E|\xi_e\eta_e^2| \\
& \quad +  2 \sum_{i=1}^n (E|\xi_i \eta_i\theta_i| + |E(\xi_i\eta_i)|E|\theta_i|) + \sum_{i=1}^n E|\xi_i\eta_i^2| \\
& \leq \tfrac{5c_M}{\sigma_1}\left(|A_{e,L,1}| \tfrac{c_M}{\sigma_1} + |V(A_{e,L,1})|\tfrac{c_M}{\sigma_2} \right) \left(|A_{e,L,2}| \tfrac{c_M}{\sigma_1} + |V(A_{e,L,2})|\tfrac{c_M}{\sigma_2} \right) \\
& \quad + \tfrac{5c_M}{\sigma_2}\left(|A_{i,L,1}| \tfrac{c_M}{\sigma_1} + |V(A_{i,L,1})|\tfrac{c_M}{\sigma_2} \right) \left(|A_{i,L,2}| \tfrac{c_M}{\sigma_1} + |V(A_{i,L,2})|\tfrac{c_M}{\sigma_2} \right) \\
& \leq \tfrac{5c_M}{\sigma_1}\left(\tfrac{c_M}{\sigma_1} + \tfrac{2c_M}{\sigma_2}  \right)^2|A_{e,L,1}||A_{e,L,2}| + \tfrac{5c_M}{\sigma_2}\left(\tfrac{c_M}{\sigma_1} + \tfrac{2c_M}{\sigma_2}  \right)^2|A_{i,L,1}||A_{i,L,2}|
\end{align*}
When $\sum_{e\in G} |A_{e,L,1}||A_{e,L,2}| = o(n|G|^{0.5})$ and $\sum_{i=1}^n |A_{i,L,1}||A_{i,L,2}| = o(n^{1.5})$, we have $\delta\rightarrow 0$ as $n\rightarrow \infty$.


Let $\Sigma_X$ be the covariance matrix of $(X_{\CBB}(t_1), X_{\CBB}(t_2),\dots,X_{\CBB}(t_K))$.  Follow similar arguments in \cite{chen2015graph}, $|\Sigma_X| = \tfrac{\prod_{k=1}^K (1-t_k/t_{k+1})}{\prod_{k=1}^K(1-t_k/n)}$.  So     $\Sigma_X$ is positive definite. 

\subsubsection{Proof for Lemma \ref{thm:cbbp}}
For $t=aL$, $a=[mw_k]$, $k=1,\dots,K$, we have
\begin{align*}
\bV_{\CBB}(R_G(t)) & = c_1 p_{1,\CBB}(a) + c_2 p_{2,\CBB}(a) + c_3 p_{3,\CBB}(a) - c_0^2 p_{1,\CBB}^2(a) \\
& = c_1\tfrac{4a^2(m-a)^2}{m^2(m-1)^2} + (2c_1+c_2)\tfrac{-4a^2(m-a)^2+m(m-1)a(m-a)}{m^2(m-1)^2} \\
& \quad + (c_1+c_2+c_3-c_0^2)\tfrac{4a^2(m-a)^2}{m^2(m-1)^2}, \\
\bV_{\CBP}(R_G(t)) & = c_1 p_{1}(a) + c_2 p_{2}(a) + c_3 p_{3}(a) - c_0^2 p_{1}^2(a) \\
& = c_1\tfrac{a(m-a)(a-1)(m-a-1)}{m(m-1)(m-2)(m-3)} \\
& \quad + \left(2c_1+c_2-\tfrac{4c_0^2}{m} + \tfrac{2c_0^2}{m(m-1)}\right)\tfrac{-4a^2(m-a)^2+m(m-1)a(m-a)}{m(m-1)(m-2)(m-3)} \\
& \quad + (c_1+c_2+c_3-c_0^2)\tfrac{4a^2(m-a)^2}{m^2(m-1)^2}.
\end{align*}


From Remark \ref{remark:coef}, we know that $c_1+c_2+c_3-c_0^2\geq 0$.  Using the same notations in Remark \ref{remark:coef}, under a certain blocking $\omega$, let $|G_{b_i}(\omega)|$ be the number of edges in $G$ that connect a node in block $i$ to another node not in block $i$.  Then $c_0(\omega) = \frac{1}{2}\sum_{i=1}^m |G_{b_i}(\omega)|$, and $c_2(\omega) + 2c_1(\omega) = \sum_{i=1}^m |G_{b_i}(\omega)|^2$.  By Cauchy-Schwarz inequality, we have that 
$$c_2(\omega) + 2c_1(\omega)\geq \tfrac{\left(\sum_{i=1}^m |G_{b_i}(\omega)|\right)^2}{m} = \tfrac{4c_0^2(\omega)}{m}.$$
Then, $$2c_1+c_2 = \tfrac{1}{L}\sum_{\omega\in\Omega} 2c_1(\omega) + c_2(\omega) \geq \tfrac{1}{L}\sum_{\omega\in\Omega} \tfrac{4c_0^2(\omega)}{m} \geq \tfrac{4c_0^2}{m}. $$
So $2c_1+c_2-\tfrac{4c_0^2}{m}\geq 0$.  

Also, for any node $j$ in block $i$, it is clear that $|G_{b_i}(\omega)| \leq |A_{j,L,1}|$, so $2c_1(\omega)+c_2(\omega) \leq \tfrac{1}{L^2} \sum_{j=1}^n |A_{j,n,1}|^2$.  Then $2c_1+c_2\leq \tfrac{1}{L^2} \sum_{j=1}^n |A_{j,n,1}|^2 \leq \tfrac{1}{L^2} \sum_{j=1}^n |A_{j,n,1}||A_{j,n,2}| = o(n^{1.5})$.

Notice that $a(m-a)=O(m^2)$ for $a=[mw_k], k=1,\dots,K$, and $0<a(m-a)\leq \frac{m^2}{4}$, we have 
$$-4a^2(m-a)^2+m(m-1)a(m-a)\in \left[-\tfrac{m^3}{4}, \tfrac{m^2(m-1)^2}{16}\right].$$
If $a$ is in a range such that $-4a^2(m-a)^2+m(m-1)a(m-a)=O(m^3)$, notice that $c_0 = O(|G|)$, $2c_1+c_2 = o(n^{1.5}), \tfrac{4 c_0^2}{m} = o(m c_0)$, and $c_1\geq c_0$, so term $(2c_1+c_2)\tfrac{-4a^2(m-a)^2+m(m-1)a(m-a)}{m^2(m-1)^2}$ is dominated by term $c_1\tfrac{4a^2(m-a)^2}{m^2(m-1)^2}$, and term $\left(2c_1+c_2-\tfrac{4c_0^2}{m} + \tfrac{2c_0^2}{m(m-1)}\right)\tfrac{-4a^2(m-a)^2+m(m-1)a(m-a)}{m(m-1)(m-2)(m-3)}$ is dominated by term $c_1\tfrac{a(m-a)(a-1)(m-a-1)}{m(m-1)(m-2)(m-3)}$, then 
$$\lim_{m\rightarrow\infty} \tfrac{\bV_{\CBB}(R_G(t))}{\bV_{\CBP}(R_G(t))} = 1.$$

If $a$ is in a range such that $-4a^2(m-a)^2+m(m-1)a(m-a)$ is of order higher than $m^3$, then $-4a^2(m-a)^2+m(m-1)a(m-a)$ must be positive.  Under Condition 2, we have $2c_1+c_2-\frac{4c_0^2}{m} = O(2c_1+c_2)$, then $\lim_{m\rightarrow\infty} \tfrac{\bV_{\CBB}(R_G(t))}{\bV_{\CBP}(R_G(t))}$ is a positive constant.  Under Condition 3, we have $2c_1+c_2=O(c_1)$, then $\lim_{m\rightarrow\infty} \tfrac{\bV_{\CBB}(R_G(t))}{\bV_{\CBP}(R_G(t))}$ is also a positive constant.

For $t=aL, a=[mw_k], k=1,\dots,K$, notice that $ \bE_{\CBB}(R_G(t))-E_{\CBP}(R_G(t)) = -c_0\tfrac{2a(m-a)}{m^2(m-1)}$.  From the above arguments, we know that $\bV_\CBB(R_G(t))$ is at least of order $|G|$, and $c_0=O(|G|)$, then $\frac{\bE_{\CBB}(R_G(t))-E_{\CBP}(R_G(t))}{\sqrt{\bV_\CBB(R_G(t))}}$ is at most of order $|G|^{1/2}m^{-1} = m^{0.5\alpha-1}$, which converges to 0 as $m\rightarrow\infty$ since $\alpha<\tfrac{8}{7}$.

\subsection{Proof for Theorem \ref{thm:Cov_cbp}}
\label{sec:Cov_cbp_Proof}
We need to figure out $\bE_{\CBP}(R_G(t_1)R_G(t_2))$ for $t_1=a_1L<t_2=a_2L$ with $a_1, a_2\in\{1,\dots,m\}$.  We have
\begin{align*}
& \bE_{\CBP}(R_G(t_1)R_G(t_2)) \\
& = \sum_{(i,j), (u,v)\in G} P(g_{\pi_{\CBP}(i)}(t_1)\neq g_{\pi_{\CBP}(j)}(t_1),\ g_{\pi_{\CBP}(u)}(t_2) \neq \pi_{\pi_{\CBP}(v)}(t_2)) \\
& = \sum_{(i,j)\in G} P(g_{\pi_{\CBP}(i)}(t_1)\neq g_{\pi_{\CBP}(j)}(t_1), g_{\pi_{\CBP}(i)}(t_2)\neq g_{\pi_{\CBP}(j)}(t_2)) \\
&  + \sum_{(i,j), (i,u)\in G;\ j\neq u} P(g_{\pi_{\CBP}(i)}(t_1)\neq g_{\pi_{\CBP}(j)}(t_1),\ g_{\pi_{\CBP}(i)}(t_2) \neq \pi_{\pi_{\CBP}(u)}(t_2)) \\
&  + \sum_{\tiny \begin{array}{c}(i,j), (u,v)\in G\\ i,j,u,v\text{ all different}\end{array}} P(g_{\pi_{\CBP}(i)}(t_1)\neq g_{\pi_{\CBP}(j)}(t_1),\ g_{\pi_{\CBP}(u)}(t_2) \neq \pi_{\pi_{\CBP}(v)}(t_2)).
\end{align*}

Notice that, when $i$, $j$, $u$, $v$ are all in different blocks, we have
\begin{align*}
& P(g_{\pi_{\CBP}(i)}(t_1)\neq g_{\pi_{\CBP}(j)}(t_1), g_{\pi_{\CBP}(i)}(t_2)\neq g_{\pi_{\CBP}(j)}(t_2)) \\
& \quad  = \tfrac{2a_1(m-a_2)}{m(m-1)} = q_1(a_1,a_2), \\
& P(g_{\pi_{\CBP}(i)}(t_1)\neq g_{\pi_{\CBP}(j)}(t_1),\ g_{\pi_{\CBP}(i)}(t_2) \neq \pi_{\pi_{\CBP}(u)}(t_2)) \\
& \quad  = \tfrac{a_1(m-a_2)(m-2a_1+2a_2-2)}{m(m-1)(m-2)} = q_2(a_1,a_2), \\
& P(g_{\pi_{\CBP}(i)}(t_1)\neq g_{\pi_{\CBP}(j)}(t_1),\ g_{\pi_{\CBP}(u)}(t_2) \neq \pi_{\pi_{\CBP}(v)}(t_2)) \\
& \quad = \tfrac{4a_1(m-a_2)[(a_1-1)(m-a_1-1)+(a_2-a_1)(m-a_1-2)]}{m(m-1)(m-2)(m-3)} = q_3(a_1,a_2).
\end{align*}

When some or all $i$, $j$, $u$, $v$ are in the same block, we could follow exactly the same procedure in the proof for Theorem \ref{thm:Vcbp} while replacing $p_1(a)$, $p_2(a)$, $p_3(a)$ by $q_1(a_1,a_2)$, $q_2(a_1,a_2)$, $q_3(a_1,a_2)$, respectively.  Hence, the result follows.

\end{document}